\documentclass[a4paper,twocolumn,superscriptaddress,11pt,accepted=2019-03-29]{quantumarticle}
\pdfoutput=1
\usepackage[utf8]{inputenc}
\usepackage[english]{babel}
\usepackage[T1]{fontenc}
\usepackage{amsmath}
\usepackage{hyperref}
\usepackage{tikz}
\usepackage[T1]{fontenc}
\usepackage{amsmath, amsthm, amssymb, mathtools, dsfont}
\usepackage{graphicx}
\usepackage{dcolumn}
\usepackage{bm,bbm}
\usepackage{color}
\usepackage{paralist}
\usepackage{tikz}
\usepackage[numbers,sort&compress]{natbib}
\usepackage{changes}
\usepackage{geometry}
\geometry{tmargin=2cm,bmargin=2cm,lmargin=1.5cm,rmargin=1.5cm}

\definecolor{myurlcolor}{rgb}{0,0,0.7}
\definecolor{myrefcolor}{rgb}{0.8,0,0}
%
\DeclareGraphicsExtensions{.png,.pdf,.eps}
\graphicspath{{./figures/} }
\usepackage{epstopdf}

\usepackage{changes}

\setlength{\belowcaptionskip}{-10pt}

\makeatletter

\usepackage{tikz}

\usepackage{geometry}
\geometry{tmargin=2cm,bmargin=2cm,lmargin=1.5cm,rmargin=1.5cm}

\graphicspath{ {./figures/} }
\usepackage{epstopdf}


\setlength{\belowcaptionskip}{-10pt}

\makeatletter


\theoremstyle{plain}
 
 \theoremstyle{plain}
 \newtheorem{lem}{Lemma}
 \theoremstyle{plain}
 \newtheorem{prop}{Proposition}
 \theoremstyle{plain}
 \newtheorem{thm}{Theorem}
 \theoremstyle{plain}
 
 \theoremstyle{plain}
 
 \theoremstyle{plain}
 
  \theoremstyle{plain}
 
 \theoremstyle{remark}
 \newtheorem*{rem*}{Remark}
 \theoremstyle{plain}
  

\newcommand{\ket}[1]{| #1 \rangle}
\newcommand{\bra}[1]{\langle #1 |}

\newcommand{\1}{\mathbbm{1}} 


\renewcommand{\exp}{\mathrm{exp}}
\DeclareMathOperator{\tr}{tr}

\renewcommand{\H}{\mathcal{H}}

\newcommand{\defeq}{\coloneqq}

\newcommand{\C}{\mathbb{C}} 
\newcommand{\R}{\mathbb{R}} 
\newcommand{\M}{\mathbf{M}} 
\newcommand{\N}{\mathbf{N}} 
\newcommand{\W}{\mathbf{W}} 
\newcommand{\X}{\mathbf{X}} 
\newcommand{\FF}{\mathbf{F}} 

\newcommand{\PP}{\mathcal{P}} 
\newcommand{\Sep}{\mathrm{Sep}} 


\newcommand{\E}{\mathcal{E}} 
\newcommand{\F}{\mathcal{F}} 
\newcommand{\Q}{\mathcal{Q}} 
\newcommand{\OO}{\mathcal{O}} 
\newcommand{\IC}{\mathcal{IC}} 

\newcommand{\psucc}{p_{\mathrm{succ}}} 

\newcommand{\Herm}{\mathrm{Herm}} 


\begin{document}

\title{Operational relevance of resource theories of quantum measurements}

\author{Micha\l\ Oszmaniec}
\email{michal.oszmaniec@gmail.com} 
\author{Tanmoy Biswas}
\affiliation{ 
Institute of Theoretical Physics and Astrophysics, National Quantum Information Centre, Faculty of Mathematics, Physics and
Informatics, University of Gdansk, Wita Stwosza 57, 80-308 Gda\'nsk, Poland}

\begin{abstract} 
For any resource theory it is essential to identify tasks for which resource objects offer advantage over free objects. We show that this identification can always be accomplished for resource theories of quantum measurements in which free objects form a convex subset of measurements on a given Hilbert space. To this aim we prove that every resourceful measurement offers advantage for some quantum state discrimination task. Moreover, we give an operational interpretation of robustness, which quantifies the minimal amount of noise that must be added to a measurement to make it free. Specifically, we show that this geometric quantity is related to the maximal relative advantage that a resourceful measurement offers in a class of minimal-error state discrimination (MESD) problems. Finally, we apply our results to two classes of free measurements:  incoherent measurements (measurements that are diagonal in the fixed basis) and separable measurements (measurements whose effects are separable operators).  For both of these scenarios we find, in the asymptotic setting in which the dimension or the number of particles increase to infinity, the maximal relative advantage that resourceful measurements offer for state discrimination tasks. 
\end{abstract}
\maketitle

\section{Introduction}Resource theories \cite{GourResource} constitute a powerful toolbox to study physical systems in the presence of limitations resulting from experimental or operational constrains on the ability to address and manipulate physical systems. This mathematical framework is general enough to encompass both classical and quantum physics, or even more general theories \cite{CoeckeCategory}. In recent years it has been successfully applied to classical and quantum thermodynamics \cite{BrandaoTherm,Goold2016}, processing of quantum information in distributed scenarios \cite{PlenioEnt,ENTrev}, contextuality  \cite{GrudkaContextuality}, nonlocality \cite{BarrettNonLocality}, steering \cite{GallegoSteering2015} and, last but not least, magic state distillation paradigm of quantum computation \cite{VeitchMagic}. On the general level, all resource theories are defined by specifying \emph{free objects} and \emph{free operations}. Free objects form a subset of the set of all objects relevant for the physical situation in question. Likewise, free operations form a subclass of all relevant physical transformations. Typically objects that are not free are called \emph{resource objects}. The specific choice of free objects and free operations depends on the physical context. For example, in entanglement theory the relevant resource theory is based on the principle of locality: set of free objects consists of separable states while free operations are all local operations assisted by classical communication (LOCC).

For any resource theory it is desirable to give operational interpretation of resource objects i.e to identify a task for which a given resource object would prove advantageous over all free objects. The main purpose of this work is to provide such interpretations for resource theories in which a set of free objects consists of convex subset of the set of quantum measurements (POVMs) on a relevant Hilbert space. We realize this goal by showing, that in this setting for every resourceful measurement $\M$ there an instance of minimal-error quantum state discrimination game for which $\M$ gives greater success probability then all free measurements. Quantum state discrimination \cite{Barnett09,StructureStateDISC} is a popular quantum information subroutine that finds applications in different areas of quantum information science including quantum communication \cite{BrunnerQSD}, quantum metrology \cite{SpekkensMetro}, nonlocality \cite{BaeNSQSD} or quantum computation \cite{HSPreview}. We push our operational interpretation further by proving the quantitative relation between the relative advantage of resourceful measurement for state discrimination and the geometric measure of resourcefulness called robustness \cite{Skrzypczyk_robustness,Adesso_subchannel}, which quantifies the minimal amount of noise that has to be added to a POVM to make it free. Previously,  the measure robustness was introduced \cite{Vidal_robustness} to mathematically quantify entanglement. Leter, it was employed in the general resource- theoretic framework  \cite{BrandaoGourRes,Regula_2017,GourResource}.

We apply our general results to two classes of free measurements: incoherent measurement and separable measurements. Incoherent measurements are POVM analogues of incoherent states \cite{AbergSuperposition,PlenioCoherence,ReviewCoherence} and can be understood as measurements originating from a single projective measurement (in the fixed basis) followed by arbitrary classical post-processing. Separable measurements \cite{BandSep,WatrousBook,BaganPurity,KarginHypo}, on the other hand, are POVMs on composite quantum systems whose effects are separable operators. This class contains the set of LOCC measurements \cite{Nathanson2005,Chitambar2014} i.e. measurements that can be implemented via local measurements and LOCC operations. For both incoherent and separable measurements we focus on the asymptotic setting in which the dimension of the system or the number of particles involved go to infinity. In this regime we identify, for both classes of measurements, the maximal relative advantage that resourceful measurements can offer for quantum state discrimination tasks. 

Traditionally, operational interpretations of quantum resources were developed case-by-case for different resource theories \cite{DarianoMetro,HorodeckiFund,SpekkensParity,Cavalcanti2017,WinterDistillation,BiswasCoherence}.
A recent paper \cite{Adesso_subchannel} gave an unified treatment of this problem for all quantum resource theories in which the set of free objects is convex subset of the set of quantum states. 
Specifically, the authors of this work showed that in this context all resource states offer advantage for some sub-channel discrimination problem (this result was previously obtained for resource theories of entanglement \cite{PianiChannel}, coherence \cite{AdessoRobustness}, steering \cite{PianiSteering} and asymmetry \cite{PianiAssymetry}). Our results are complementary to \cite{Adesso_subchannel}, as they give the operational interpretation of \emph{all convex resource theories of quantum measurements}. Moreover, our work greatly generalizes some of the results of \cite{Skrzypczyk_robustness}, where the analogous analysis was presented for the case where free measurements consisted of maximally uninformative measurements i.e. measurements that do not recover any information about the measured quantum states. 

So far, research in quantum resource theories focussed mainly on quantum states while quantum measurements, despite their importance, did not receive much attention. Previously, resource-theoretic perspective was applied in the context of measurement incompatibility \cite{HeinoComp,Leo2017}, measurement simulability via projective measurements \cite{OszmaniecPOVM,MOPOST}, and simulability in the more general scenarios \cite{Leo2017,Kleinmann2016a}. We believe that our results, especially previously unexplored quantitative relation between state discrimination and robustness, provide new quantitative tools to study the restricted classes of POVMs and, more generally, quantum resource theories concentrated around quantum measurements.

\section{Notation and main concepts}

Throughout the paper we will be interested in POVMs on finite dimensional Hilbert space $\H\approx \C^d$. Such generalized measurements (POVMs) can be understood as tuples $\M=\left(M_{1},\ldots,M_{n}\right)$ of non-negative operators on $\C^d$  satisfying $\sum_{i=1}^{n}M_{i}=\1$, where $n$ is the number of outcomes and $\1$ is the identity on $\C^d$. The operators $M_{i}$ are called the effects of POVM $\M$. According to Born's rule, when a POVM $\M$ is measured on a quantum state $\rho$ the probability of obtaining the outcome $i$ is given by $p_i=\tr\left(M_{i}\rho\right)$. We denote the set of POVMs on $\C^d$ with $n$ outcomes by $\PP\left(d,n\right)$. This set has a natural convex structure \cite{DAriano2005}: given two POVMs $\M,\N\in\PP\left(d,n\right)$, their convex combination $p\M+(1-p)\N$ is the POVM with $i$-th effect given by $\left[p\M+\left(1-p\right)\N\right]_i \defeq pM_{i}+(1-p)N_{i}$. The operation of taking convex combinations of measurements can be operationally realized as performing POVMs $\M$ and $\N$ with certain probabilities and then combining the outcomes. 

Our operational interpretation of \emph{resourcefullness} of quantum measurements will be based on the task of minimal-error state discrimination \cite{Barnett09,StructureStateDISC}.  The purpose of this task is to optimally distinguish quantum states states generated by the ensemble $\E = \{q_i,\rho_i\}_{i=1}^{n}$, where $\{q_i\}_{i=1}^{n}$ is a probability distribution and $\{\rho_i\}_{i=1}^{n}$ is a collection of quantum states.  The success probability of identifying the states generated by $\E$ via a measurement $\M \in \PP(d,n)$ is given by  
\begin{equation}\label{eq:prMESD}
\psucc(\E,\M) = \sum_{i=1}^n q_i\tr(M_i\rho_i)\ . 
\end{equation}
One is often interested in choosing the measurement $\M$, such that $\psucc(\E,\M)$ is maximized. However, if only a restricted class of measurements $\F\subset\PP(d,n)$ is allowed, the maximum $\psucc$ may not be achieved.

\section{Convex resource theories of measurements and measurement robustness}
We now give a minimal formulation of a resource theory of measurements, under the assumption of convexity. In our treatment we will focus on a set of quantum measurements $\F$ and a class of free operations \footnote{Of course, the full treatment of resource theory of measurements requires to take into account more complicated aspects such as composability. We plan to address such questions in the future work.}  $\OO$. We first assume (i) that the set of free measurements $\F$ is a convex and closed subset of  $n$-outcome measurements on $\C^d$ i.e $\F\subset \PP(d,n)$, for some suitable Hilbert space $\H \approx \C^d$. Second, we assume that the class of free operations $\OO$ consists of mappings $\varphi:  \PP(d,n) \rightarrow \PP(d,n)$ that (ii) preserve the set of free measurements i.e. for all $\N\in\F$ we have $\varphi(\N)\in\F$ and (iii) are convex-linear i.e.  $\varphi(p\M +(1-p) \M' )= p \varphi(\M) +(1-p) \varphi(\M')$ for all measurements $\M,\M'\in\PP(d,n)$. \\
\begin{figure}
\begin{center}
 \includegraphics[width=9 cm]{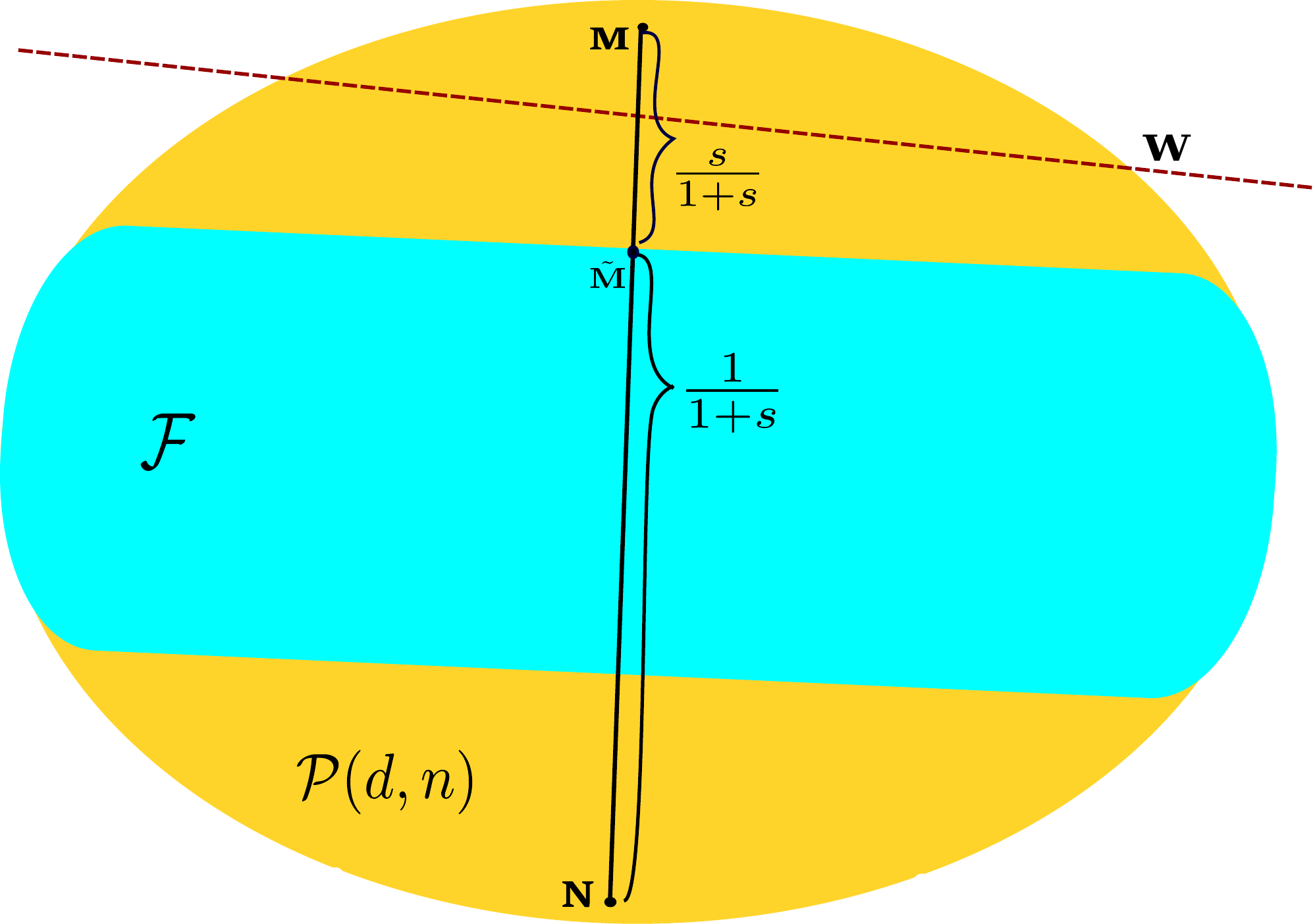}
\caption{\label{fig:Robustness of POVM}
A figure demonstrating some of the key ideas of the paper. The blue part denotes the of all free measurements $\F$ while orange part denotes  the set of all $n$ outcome measurements on $\C^d$ that are not free. (i) Illustration of Theorem \ref{th:witness}: for any   $\M \notin \mathcal{F}$ there always exists a hyperplane $\W$ which separates any POVM $\M \notin \F$ in a sense $\langle \M . \W \rangle > \langle \FF . \W \rangle$ for all $\FF\in\F$. By Theorem \ref{th:witness}, every separating hyperlane can be rewritten in terms of a state discrimination problem  for the ensemble of quantum states $\E$ such that for all free measurements $\FF$ we have $\psucc(\E,\M)> \psucc(\E,\FF)$. (ii) Geometric explanation of Robustness: the straight line represents convex combinations $\frac{\M+s\N}{1+s}$. For a given POVMs $\M$ and $\N$ there exist the minimal value $s_\ast$ (depending on $\N$ for a fixed $\M$) such that $\frac{\M+s\N}{1+s}$ becomes free. The robustness $R_{\F}(\M)$ is defined as the minimum $s_\ast$, considering all possible POVM $\N \in \PP(d,n)$. }  
\end{center}

\end{figure}
\emph{Justification of assumptions:} The convexity of $\F$ can be justified by the fact that, as argued before, convex combination of measurements can be realised by a purely classical process. The compactness of $\F$ is a technical assumption that allows us to formally state our results. However, in basically any physically interesting case the class of free measurements $\F$ can be viewed as a set of convex combinations (convex hull) of a compact set of "primitive"  measurements $\F'$, and hence is naturally compact. Finally, convex linearity of free operations follows from linearity of quantum mechanics while  preservation of the $\F$ by free operation is also natural - operations that are considered free should not be able to create resourceful measurement. In particular, in many cases the set $\F$ contains (a)  classical post-processing i.e. operation of the form $\Q(\M)=\M'$, where $M'_i = \sum_{j} q(i|j) M_i$, for some stochastic matrix $\Q_{ij} = q(i|j)$, and (b) unitary symmetries of the set $\F$. 

For every set of free measurements $\F$ it is possible to introduce a geometric measure quantifying the minimum amount of noise which has to be added to a given POVM $\M$ to make it free.  In analogy to previous works \cite{Vidal_robustness,Adesso_subchannel,Skrzypczyk_robustness} we will call this quantifier \emph{robustness} and denote it by $R_\F$. It is formally defined in the following way (See Fig. \ref{fig:Robustness of POVM}):
\begin{equation} \label{eq:robDEF} 
R_{\F}(\M):=\min \bigg\{s |\ \exists \N \text{ such that } \frac{\M+s\N}{1+s}\in \F \bigg\} \ .
\end{equation}
The robustness enjoys a number of natural properties as stated by the following proposition. 
\begin{prop}[Properties of robustness]\label{lem:prop}
Let $\F\subset \PP(d,n)$ be a closed convex set of free measurements. The robustness  $R_\F$  satisfies the following properties: (i) 
Faithfulness: $R_\F (\M) \geq 0$ and $R_\F (\M)>0$ iff $\M\notin\F$, (ii) Convexity: $R_\F (p\M_1 +(1-p) \M') \leq  p R_\F(\M) +(1-p) R_\F(\M')$ , (iii)  Monotonicity under free operations: $R_\F \left[\varphi(\M)\right]\leq R_\F (\M)$ for all free operations $\varphi\in\OO$.
\end{prop}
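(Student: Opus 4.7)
The plan is to extract each property directly from the defining formula~(\ref{eq:robDEF}) for $R_\F$, leveraging the convexity and closedness of $\F$ and the convex-linearity and $\F$-preservation of free operations. A preliminary observation I would establish first is that the infimum in~(\ref{eq:robDEF}) is actually attained: because $\PP(d,n)$ is compact and $\F$ is closed, any minimizing sequence $(s_k,\N_k)$ admits a convergent subsequence whose limit $(s_\ast,\N_\ast)$ realizes the minimum. This also immediately gives the harder direction of \textbf{faithfulness}~(i): if $R_\F(\M)=0$ the minimizer satisfies $\M/1\in\F$, so $\M\in\F$; conversely, if $\M\in\F$ then $s=0$ together with $\N=\M$ is feasible, so $R_\F(\M)=0$. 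Nonnegativity is built into the definition via $s\geq 0$.

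For \textbf{convexity}~(ii), I would fix optimal decompositions $(\M_i+s_i\N_i)/(1+s_i)=\FF_i\in\F$ with $s_i=R_\F(\M_i)$, and then guess the natural candidate $s=ps_1+(1-p)s_2$ together with noise $\N=[ps_1\N_1+(1-p)s_2\N_2]/s$ for the mixture $p\M_1+(1-p)\M_2$ (the degenerate case $s=0$ reduces to $\M_1,\M_2\in\F$ by faithfulness and is trivial). A one-line rearrangement yields
\begin{equation*}
\frac{p\M_1+(1-p)\M_2+s\N}{1+s}=\frac{p(1+s_1)}{1+s}\FF_1+\frac{(1-p)(1+s_2)}{1+s}\FF_2,
\end{equation*}
and the coefficients sum to $1$ because $p(1+s_1)+(1-p)(1+s_2)=1+s$. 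Convexity of $\F$ then places this element in $\F$, certifying $R_\F(p\M_1+(1-p)\M_2)\leq ps_1+(1-p)s_2$.

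For \textbf{monotonicity}~(iii), I would take an optimal witness $(\M+s\N)/(1+s)\in\F$ for $s=R_\F(\M)$, apply a free operation $\varphi\in\OO$, and exploit convex-linearity to obtain $\varphi\bigl((\M+s\N)/(1+s)\bigr)=(\varphi(\M)+s\varphi(\N))/(1+s)$. Since $\varphi$ preserves $\F$, the left-hand side lies in $\F$, and $\varphi(\N)\in\PP(d,n)$ is a legitimate noise POVM; hence this decomposition certifies $R_\F(\varphi(\M))\leq s=R_\F(\M)$. None of the three arguments presents a real obstacle: the whole proof is a direct manipulation of convex combinations, with the only slightly delicate point being the attainment of the minimum, which relies on the compactness of $\PP(d,n)$ and the closedness of $\F$.
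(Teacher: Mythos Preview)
Your proof is correct and follows essentially the same route as the paper's: for convexity you build the same feasible pair $\bigl(s,\N\bigr)=\bigl(ps_1+(1-p)s_2,\ [ps_1\N_1+(1-p)s_2\N_2]/s\bigr)$ and verify that the resulting combination is a convex mixture of the two free measurements, and for monotonicity you apply $\varphi$ to an optimal decomposition and invoke convex-linearity and $\F$-preservation, exactly as the paper does. Your treatment is in fact slightly more careful than the paper's, since you explicitly justify attainment of the minimum via compactness and handle the degenerate case $s=0$ in the convexity argument.
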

The proof of the above follows directly form the definitions of the concepts involved and we defer it to the Appendix.

\section{Main results}
In this part we give our main results concerning the operational interpretation of resourceful measurements for a class of quantum state discrimination tasks. 

\begin{thm}[Minimal-error state discrimination \\ can certify resource character of measurements] \label{th:witness}
For any \textit{resourceful} POVM $\M \notin \F$, there exist an ensemble $\E_\M =\{p_i,\rho_i\}$ such that
\begin{equation} \label{eq:advantageRES}
  \psucc(\E_\M,\M)> \max_{\N\in\F}\psucc(\E_\M,\N)\ .
\end{equation}
\end{thm}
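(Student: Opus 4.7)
The plan is to apply a separating hyperplane theorem in the real vector space of Hermitian-operator tuples and then recast the resulting witness as a minimal-error state discrimination game. I view $(\Herm(\C^d))^n$ as a real inner-product space with $\langle\M,\W\rangle=\sum_{i=1}^n\tr(M_i W_i)$, so that $\F$ sits inside it as a closed and (by boundedness of POVMs) compact convex subset.

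Step one: since $\M\notin\F$, finite-dimensional Hahn--Banach separation yields a Hermitian tuple $\W=(W_1,\dots,W_n)$ and a scalar $\alpha\in\R$ with
\[
  \sum_{i=1}^n\tr(M_i W_i)\;>\;\alpha\;\geq\;\sum_{i=1}^n\tr(N_i W_i)\quad\text{for every }\N\in\F.
\]
Step two is to massage $\W$ into positive form. A priori the $W_i$ need not be positive semidefinite, but since $\sum_i M_i=\1$ for every POVM, replacing $W_i$ by $\tilde W_i:=W_i+c\1$ shifts both sides of the inequality by the same constant $cd$, leaving the strict separation intact. Choosing $c>\max_i\|W_i\|_\infty$ makes every $\tilde W_i$ strictly positive definite.

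Step three reads off the ensemble. Set $Z:=\sum_i\tr(\tilde W_i)>0$, $q_i:=\tr(\tilde W_i)/Z$, and $\rho_i:=\tilde W_i/\tr(\tilde W_i)$. Then $\{q_i\}$ is a probability distribution, each $\rho_i$ is a density operator, and $q_i\rho_i=\tilde W_i/Z$. Dividing the shifted inequality by $Z$ identifies both sides as success probabilities for the ensemble $\E_\M:=\{q_i,\rho_i\}_{i=1}^n$, giving
\[
  \psucc(\E_\M,\M)\;>\;\psucc(\E_\M,\N)\quad\text{for every }\N\in\F,
\]
and taking the maximum over the compact set $\F$ delivers \eqref{eq:advantageRES}.

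The step I expect to be the main obstacle is the second one. The witness produced by Hahn--Banach is only guaranteed to be a Hermitian-valued tuple, whereas a state-discrimination game demands operators that are positive semidefinite and ensemble weights that form a probability distribution. The key observation that resolves this is the POVM normalisation $\sum_i M_i=\1$: it means that one may shift all $W_i$ by a common multiple of the identity without altering the relative separation between $\M$ and any element of $\F$, which is precisely the gauge freedom needed to push the witness into the positive cone before rescaling.
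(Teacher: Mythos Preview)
Your argument is correct and follows essentially the same route as the paper: separate $\M$ from $\F$ by a Hermitian-tuple hyperplane, shift by a multiple of the identity (using the POVM normalisation $\sum_i M_i=\1$) to make the witness positive, then normalise to an ensemble. Your version is in fact slightly tidier, since choosing $c>\max_i\|W_i\|_\infty$ forces each $\tilde W_i$ to be strictly positive definite and so avoids the edge case $\tr(\tilde W_i)=0$ that the paper's choice $c=|\lambda|$ does not explicitly rule out.
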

\begin{proof}
We begin the proof by noting that $\PP(d,n)$ is a convex subset of  $(\Herm(\mathcal{H}))^{\times n}$ where $\mathcal{H}\approx\C^d$ is a $d$- dimensional Hilbert space and $\Herm(\mathcal{H})$ is the set of hermitian operator from $\mathcal{H}$ to itself. The space $(\Herm(\mathcal{H}))^{\times n}$ can be regarded as $nd^2$ dimensional real vector space with  the inner product defined by $\langle\textbf{A},\textbf{B}\rangle= \sum_{i=1}^n \tr(A_i B_i)$, where $\textbf{A}, \textbf{B}\in (\Herm(\mathcal{H}))^{\times n} $.  We have assumed that $\F$ is a compact set and hence, due to hyperplane separation theorem (see Fig. \ref{fig:Robustness of POVM}), for any $\M \notin \F$, there exists a tuple of Hermitian operators $\W=\left(W_1,W_2,....W_n\right)$ such that $\sum_{i=1}^n \tr(W_iM_i)>0$ while for all $\N\in\F$ we have  $\sum_i \tr(W_i N_i)\leq 0$. We now define $\tilde{W_i}\coloneqq W_i+|\lambda|\mathbb{I}$, where $\lambda$ is the smallest eigenvalue of operators $\lbrace W_i\rbrace$. We note that for every $i$ the operator $\tilde{W_i}$ is positive-semi-definite and moreover $\sum_i \tr(\tilde{W_i}N_i) < \sum_i \tr(\tilde{W_i}M_i)$. Using this we construct the ensemble  $\E_\M = \bigg\{\frac{\tr(\tilde{W_i})}{\sum_i\tr(\tilde{W_i})},\frac{\tilde{W_i}}{\tr(\tilde{W_i})}\bigg\}$, for which the inequality \eqref{eq:advantageRES} holds. 
\end{proof}

The above result shows the qualitative advantage of resourceful measurements for a class of state-discrimination tasks. In what follows we show a natural way to quantify this advantage. In Theorem \ref{th:opROB} below we show that under the same assumptions as above, the robustness  $\R_\F(\M)$ is related to the maximal relative advantage that a resourceful measurement $\M$ offers for MESD.  To prove this result we will use the following technical Lemma that can be regarded as dual characterisation of measurement robustness.  Its proof relies on duality of semi-infinite convex optimization programs \cite{Brandao_witness} and we present it in the Appendix.

\begin{lem}\label{lem:AltROB}
Let $\M$ $\in$ $\PP(d,n)$ and $\F\subset\PP(d,n)$ be a compact convex set of free measurements. Then, the generalized robustness  $R_{\F}(\M)$ can be expressed as the following optimisation problem.
\begin{equation} \label{eq:AltRob}
\begin{aligned}
& {\text{maximize}}
& & \sum_{i=1}^n \tr(Z_iM_i)-1 \\
& \text{subject to}
& &  Z_i \geq 0, \; i = 1, \ldots, n\ ,\\ 
& & & \sum_{i=1}^n Tr(Z_iN_i) \leq 1  \; \forall  \N  \in \F
\end{aligned}
\end{equation}
\end{lem}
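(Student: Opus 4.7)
The plan is to reformulate the primal \eqref{eq:robDEF} in a form suitable for Lagrangian duality, check weak duality directly, and then establish strong duality via a hyperplane separation argument.

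First I would eliminate the auxiliary POVM $\N$ from the definition. Setting $K = 1+s$, the condition $(\M + s\N)/(1+s) \in \F$ for some $\N \in \PP(d,n)$ is equivalent to the existence of $\FF \in \F$ with $KF_i \geq M_i$ for every $i$: the forward direction is immediate from $KF_i = M_i + sN_i$, while conversely $N_i := (KF_i - M_i)/(K-1)$ is positive and sums to $\1$. Consequently
\begin{equation*}
1+R_\F(\M) \;=\; \min\bigl\{K \geq 0 \,:\, \exists\, \FF \in \F,\ K F_i \geq M_i \ \forall i\bigr\}\,.
\end{equation*}
Weak duality is then a one-line calculation: for any primal-feasible $(K,\FF)$ and any $Z$ feasible in \eqref{eq:AltRob},
$$\textstyle\sum_i \tr(Z_i M_i) \;\leq\; K \sum_i \tr(Z_i F_i) \;\leq\; K\,,$$
so $\sum_i \tr(Z_iM_i) - 1 \leq R_\F(\M)$.

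The main step is strong duality, which I would prove by hyperplane separation. Let $R^* = R_\F(\M)$ and introduce the set
$$\mathcal{A} \;=\; \bigl\{(\X,K) \in \Herm(\H)^{\times n} \times \R \,:\, K \geq 0,\ \exists\, \FF \in \F \text{ with } X_i \leq KF_i \ \forall i\bigr\}\,.$$
Convexity and compactness of $\F$ imply that $\mathcal{A}$ is a closed convex cone. The reformulation above shows that $(\M, 1+R^*) \in \mathcal{A}$ while $(\M, K') \notin \mathcal{A}$ for any $K' < 1+R^*$, so $(\M, 1+R^*)$ lies on the boundary of $\mathcal{A}$. A supporting hyperplane at this point furnishes a nonzero pair $(Z, c)$ satisfying $\sum_i \tr(Z_i X_i) + c K \leq 0$ on $\mathcal{A}$ with equality at $(\M, 1+R^*)$. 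Specializing to $(\X, K) = (-\P, 0)$ with $P_i \geq 0$ forces $Z_i \geq 0$, and specializing to $\X = K \FF$ with $K \geq 0$ and $\FF \in \F$ gives $\sum_i \tr(Z_i F_i) \leq -c$ for every $\FF \in \F$ and hence $c \leq 0$. When $c < 0$, the rescaled $Z' := Z/(-c)$ is feasible in \eqref{eq:AltRob} and attains $\sum_i \tr(Z'_i M_i) = 1+R^*$, closing the duality gap; subtracting $1$ yields the stated identity.

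The main obstacle is the degenerate case $c = 0$, which corresponds to failure of Slater's condition and is the familiar subtle point in semi-infinite convex programming. I would handle it by a standard perturbation: replace $\M$ by $\M_\varepsilon := (1-\varepsilon)\M + \varepsilon\,(\1/n, \dots, \1/n)$, apply the separation step above — where now $c<0$ can be arranged, since the perturbed primal becomes strictly feasible once all effects have full support — and then let $\varepsilon \to 0$ using compactness of the dual feasible region and continuity of $R_\F$. Alternatively one may directly invoke the semi-infinite convex duality framework of \cite{Brandao_witness} alluded to in the statement, which encapsulates exactly this argument.
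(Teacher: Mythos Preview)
Your route is genuinely different from the paper's and is largely sound. The paper treats \eqref{eq:AltRob} itself as the ``primal'', writes its Lagrangian with measure-valued multipliers $k(\N)\,d\N$ over $\F$, and shows that the resulting dual minimization collapses back to the defining expression \eqref{eq:robDEF} for $R_\F(\M)$. Strong duality is then obtained by exhibiting a strictly feasible point for \eqref{eq:AltRob}, namely $Z_i=\lambda\,\1/d$ with $\lambda\in(0,1)$, and invoking the semi-infinite Slater theorem from \cite{Brandao_witness}. By contrast, you work from the primal side: your elimination of $\N$ (rewriting $1+R_\F(\M)=\min\{K:\exists\,\FF\in\F,\ KF_i\geq M_i\}$) and the one-line weak duality are cleaner than what the paper does, and the supporting-hyperplane construction on the cone $\mathcal{A}$ is a perfectly legitimate way to produce the optimal dual variable in the nondegenerate case $c<0$.

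The one genuine gap is in your handling of $c=0$. Your perturbation step asserts ``compactness of the dual feasible region'', but under the bare hypotheses of the lemma this is not guaranteed: the set $\{Z:Z_i\geq 0,\ \sum_i\tr(Z_iN_i)\leq 1\ \forall\,\N\in\F\}$ is bounded only if for every outcome $i$ the supports of $\{N_i:\N\in\F\}$ cover $\H$, which need not hold for an arbitrary compact convex $\F$. Without this, the optimal $Z^{(\varepsilon)}$ may fail to subconverge, and the limit argument does not go through. Your fallback to the framework of \cite{Brandao_witness} is fine and is essentially what the paper does; but note that the paper's specific manoeuvre sidesteps the whole issue more elegantly: Slater's condition is verified on \eqref{eq:AltRob} rather than on the robustness side, and the strictly feasible point $Z_i=\lambda\,\1/d$ works uniformly for every $\F$, so no degenerate case ever arises.
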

We are now ready to state and prove our main result.
\begin{thm}
[Operational interpretation of robustness in terms of quantum state discrimination] \label{th:opROB} For any class of free measurements $\F$ and any POVM $\M \in \PP(d, n)$ the generalized robustness can be expressed as 
 \begin{equation} \label{eq:opROB}
      R_{\F}(\M) = \max_{\E}\frac{\psucc(\E,\M)}{{\max_{\N\in\F}}\psucc(\E,\N)} -1,
\end{equation}
where the outer optimisation is over $n$ element ensembles of quantum states.
\end{thm}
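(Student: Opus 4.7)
The plan is to derive \eqref{eq:opROB} directly from the dual characterisation of robustness provided by Lemma \ref{lem:AltROB}. The key observation is that any feasible tuple $(Z_1,\ldots,Z_n)$ in \eqref{eq:AltRob} can be rescaled so as to encode both an ensemble of states and an overall scaling factor constrained by the inequality $\sum_i\tr(Z_iN_i)\leq 1$. Conversely, every ensemble lifts to a feasible tuple after appropriate normalisation. Matching the two constructions will yield equality between the optimal value in Lemma \ref{lem:AltROB} and the right-hand side of \eqref{eq:opROB}.

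Concretely, starting from a feasible $(Z_1,\ldots,Z_n)$ that is not identically zero, I would set $T\defeq\sum_j\tr(Z_j)$, $q_i\defeq\tr(Z_i)/T$, and $\sigma_i\defeq Z_i/\tr(Z_i)$ (choosing $\sigma_i$ to be an arbitrary state whenever $\tr(Z_i)=0$, in which case $q_i=0$ and the choice is immaterial). The tuple $\E=\{q_i,\sigma_i\}_{i=1}^n$ is then a valid ensemble and, for any tuple of effects $\mathbf{A}=(A_1,\ldots,A_n)$, one has $\sum_i\tr(Z_iA_i)=T\,\sum_i q_i\tr(A_i\sigma_i)$. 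Specialising this identity to $\mathbf{A}=\M$ and to $\mathbf{A}=\N\in\F$, the feasibility constraint becomes $T\leq 1/\max_{\N\in\F}\psucc(\E,\N)$, while the objective in \eqref{eq:AltRob} becomes $T\,\psucc(\E,\M)-1$. This is maximised at the upper bound on $T$, giving $\psucc(\E,\M)/\max_{\N\in\F}\psucc(\E,\N)-1$, and optimising further over $\E$ establishes the inequality $R_\F(\M)\leq \max_\E\psucc(\E,\M)/\max_{\N\in\F}\psucc(\E,\N)-1$.

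For the reverse direction, given any ensemble $\E=\{q_i,\sigma_i\}$, I would define $Z_i\defeq q_i\sigma_i/\max_{\N\in\F}\psucc(\E,\N)$. These operators are positive semidefinite and satisfy $\sum_i\tr(Z_iN_i)=\psucc(\E,\N)/\max_{\N'\in\F}\psucc(\E,\N')\leq 1$, hence they are feasible in \eqref{eq:AltRob}; the associated objective value equals exactly $\psucc(\E,\M)/\max_{\N\in\F}\psucc(\E,\N)-1$, and Lemma \ref{lem:AltROB} then supplies the opposite inequality. The only delicate points are the handling of trace-zero components (which contribute nothing to the ensemble) and the implicit assumption that $\max_{\N\in\F}\psucc(\E,\N)>0$, which holds whenever $\F$ contains at least one POVM all of whose effects are nonzero, a condition satisfied in every case of physical interest. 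The main conceptual difficulty is already absorbed into Lemma \ref{lem:AltROB}; once the dual form is in hand, the present theorem follows essentially from the rescaling bookkeeping sketched above.
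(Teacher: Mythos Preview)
Your proof is correct and follows essentially the same approach as the paper: both rely on Lemma~\ref{lem:AltROB} and the identification of feasible dual tuples $(Z_i)$ with scaled ensembles $(\E,T)$ via $Z_i = T q_i\sigma_i$. The only minor difference is that the paper obtains the bound $R_\F(\M)\geq \max_\E[\cdots]$ directly from the primal definition \eqref{eq:robDEF} (using $M_i\leq (1+R_\F(\M))N'_i$ for some $\N'\in\F$), whereas you extract both inequalities from the dual; either route works and the bookkeeping is the same.
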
 

\begin{proof}
We begin by showing that LHS\eqref{eq:opROB} $\geq$ RHS\eqref{eq:opROB}. From the definition of generalized robustness from \eqref{eq:robDEF} it follows that for each effect $M_i$ of POVM $\M \in \PP(d,n)$  it holds that $ M_i+\mathcal{R}_{\F}(\M)M'_i=(1+\mathcal{R}_{\F}(\M))N'_i$, where $M'_i$ are the effects of POVM $\M'\in\mathcal{P}(d,n)$ and $N'_i$ are the effects of POVM $\N'\in\F$. Therefore, we have $M_i\leq(1+\mathcal{R}_{\F}(\M))N'_i$ for every $i \in \lbrace 1,n \rbrace$ and therefore
\begin{equation} \label{eq:INEQ1}
\begin{aligned}
\psucc(\E,\M) \leq (1+\mathcal{R}_{\F}(\M))\psucc(\E,\N') \\  \leq (1+\mathcal{R}_{\F}(\M))\max_{\N \in \F}\psucc(\E,\N)\ ,
\end{aligned}
\end{equation}
where the second inequality follows from $\N'\in \F$. 

To show LHS\eqref{eq:opROB} $\leq$ RHS\eqref{eq:opROB}, we use the dual characterization of the generalized robustness from Lemma \ref{lem:AltROB}. Let $\left(Z^\ast_1,...,Z^\ast_n\right)$ be the optimal tuple for which \eqref{eq:AltRob} is maximized. We now define 
\begin{equation}
\E_\ast \coloneqq \bigg\{\frac{\tr(Z^\ast_i)}{\sum_i\tr(Z^\ast_i)},\frac{Z^\ast_i}{\tr(Z^\ast_i)}\bigg\}\ .
\end{equation}
Simple algebraic manipulations give $\mathcal{R}_{\F}(\M)+1= \bigg(\sum_{i=1}^{n}\tr(Z^\ast_i)\bigg) \psucc(\E_\ast,\M)$. On the other hand from \eqref{eq:AltRob} it follows that $\forall \N\in\F$ we have $\psucc(\E_\ast,\N) \leq \frac{1}{\sum_i \tr(Z_i^\ast)}$. Using these two properties we obtain 
\begin{equation}\label{eq:INEQ2}
\frac{\psucc(\E_\ast,\M)}{\max_{\N\in \F}\psucc(\E_\ast,\N)} \geq \mathcal{R}_{\F}(\M)+1 \ .
\end{equation}
By combining \eqref{eq:INEQ1} and \eqref{eq:INEQ2} we complete the proof. 
\end{proof}

Importantly, Theorems \ref{th:witness} and \ref{th:opROB} are valid \emph{for all} convex and compact sets of free measurements $\F$. In particular, we can directly apply them to the measurements that are simulable by projective measurements  \cite{OszmaniecPOVM,MOPOST}  or $k$-outcome measurements   \cite{Leo2017,Kleinmann2016a}. In what follows, we will consider application of our results to two other classes of free measurements: incoherent measurements and separable measurements. We will focus on obtaining the maximal relative advantage for MESD that can be obtained in these scenarios by resourceful measurements.

\subsection{Incoherent measurements}
Quantum coherence or superposition is one of the salient nonclassical aspects of quantum theory. Recently, this feature of quantum theory was investigated thoroughly form the  resource-theoretic perspective \cite{PlenioCoherence,ReviewCoherence}. Moreover, quantum coherence has been identified as resource for a number of quantum information processing tasks like quantum channel discrimination \cite{PianiChannel}, designing quantum algorithms \cite{HilleryAlgo}, quantum metrology \cite{SpekkensMetro} and quantum thermodynamics  \cite{BrandaoTherm}. Finally, there is a plethora of works on quantification of quantum coherence of quantum states \cite{PlenioCoherence,BrussPOVM} and operations \cite{PlenioEvol,BeraEvolution} and on giving operational interpretation to some quantifiers \cite{WinterDistillation,BiswasCoherence}.

Here we propose to introduce a concept of coherence to the realm of POVMs. Specifically, we define the class of \emph{incoherent measurements} $\IC(d,n)$  consisting of $n$-outcome measurements \textbf{S} on $\C^d$,  whose effects are diagonal in a particular orthogonal basis $\lbrace\ket{i}\rbrace$. In what follows, for the sake of simplicity, we assume $n\geq d$.  Incoherent measurements constitute a POVM analogues of the set of \emph{incoherent states}. More formally, the effects of incoherent measurements \textbf{S} can be expressed as $S_i = \sum_{j=1}^{d} q(i|j)|j\rangle\langle j|$, where $i\in[n]$. In is now easy to see that incoherent measurements can always be obtained via classical post-processing of projective measurements in the standard basis.  

One way to quantify the coherence present in a POVM  is the robustness $R_\IC(\M)$ with respect to the set of incoherent measurements. 
Importantly, $R_\IC(\M)$ satisfies a number of natural properties stated in Lemma \ref{lem:prop}. Here, the class of free operations $\OO_\IC$ can be chosen to contain all diagonal unitaries and classical post-processing \footnote{In fact all quantum channels $\Lambda$ such that $\Lambda^\ast$ is an incoherent channel also preserve $\IC(n,d)$.}.  By applying Theorem \ref{th:opROB} directly to this setting we get that robustness $R_\IC(\M)$, can be identified as the maximal relative advantage of  $\M$ for a suitable state discrimination problem. In the following we will take the complemenary approach based on semi-definite programming  \cite{sdpREV}. From the definition of incoherent measurements it follows that  $R_\IC(\M)$ can be casted as the following semi-definite program (SDP) 

 \begin{equation} \label{eq:SDPincohDef}
\begin{aligned}
& {\text{minimize}}
& & s\\
& \text{s.t}
& & \frac{M_a+sN_a}{1+s}=\sum_i q(a|i)|i\rangle\langle i|  \; \forall a \ , \\ 
& & & N_a \geq 0, \; a = 1, \ldots, n. \; \;  \sum_a N_a=\mathbb{I}\ , \\
& & & \sum_a q(a|i)(1+s)=b \; \;  \forall i \ .
\end{aligned}
\end{equation}

Taking the dual problem we get the following auxiliary result whose proof we defer to the Appendix.

\begin{lem}
[Dual characterization of robustness w.r.t. incoherent measurement] Robustness of a POVM $\M \in \PP(d,n)$ w.r.t set of incoherent measurement $\mathcal{IC}(d,n)$, \textit{i.e}, $R_{\IC}(\M)$ can be expressed as the solution of optimization problem given by 
\begin{equation} \label{DualincohLem}
\begin{aligned}
& {\text{maximize}}
& & \sum_{a=1}^{n}\tr(Z_aM_a) -1\\
& \text{subject to}
& & \forall i,a \; \; Z_a \geq 0\;\;\;\; \langle i |Z_a|i\rangle  = \langle i |Z_n|i\rangle \ ,  \\ 
& & & \tr(Z_n)=1\ .
\end{aligned}
\end{equation} 
\end{lem}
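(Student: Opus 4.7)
The plan is to derive the dual by applying standard semidefinite programming duality to \eqref{eq:SDPincohDef}, after a convenient linearizing substitution that eliminates the bilinear couplings $sN_a$ and $(1+s)q(a|i)$, and then to verify strong duality via Slater's condition.

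First, I would recast the primal as a linear SDP. Introduce the substitutions $\tilde{N}_a \defeq sN_a$ and $\tilde{S}_a \defeq (1+s)\sum_i q(a|i)|i\rangle\langle i|$, and write $\tilde{S}_a = \sum_i t_{a,i}|i\rangle\langle i|$ with free real variables $t_{a,i}$. The equality $M_a + sN_a = (1+s)\sum_i q(a|i)|i\rangle\langle i|$ combined with $N_a \succeq 0$ becomes $\sum_i t_{a,i}|i\rangle\langle i| \succeq M_a$; the normalization $\sum_a N_a = \mathbb{I}$ combined with the column-sum constraint translates into $\sum_a t_{a,i} = 1+s$ for every $i$; and positivity $t_{a,i} \geq 0$ follows automatically from $\sum_i t_{a,i}|i\rangle\langle i|\succeq M_a \succeq 0$. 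Thus the primal becomes \emph{minimize $s$ over $s \in \R$ and $t_{a,i} \in \R$ subject to $\sum_i t_{a,i}|i\rangle\langle i| - M_a \succeq 0$ and $\sum_a t_{a,i} - (1+s) = 0$}.

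Next, I would form the Lagrangian with a PSD multiplier $Z_a \succeq 0$ attached to each PSD constraint and a real multiplier $\lambda_i$ for each equality. Collecting the coefficients of $s$ and $t_{a,i}$, the infimum over these unrestricted primal variables is finite only if the stationarity conditions $\sum_i \lambda_i = 1$ and $\langle i|Z_a|i\rangle = \lambda_i$ hold for every $i$ and every $a$. When they hold, the dual objective reduces to $\sum_a \tr(Z_a M_a) - \sum_i \lambda_i = \sum_a \tr(Z_a M_a) - 1$. The stationarity conditions say that the computational-basis diagonal entries of $Z_a$ are the same for all $a$, so we can write $\langle i|Z_a|i\rangle = \langle i|Z_n|i\rangle$, and $\tr(Z_n) = \sum_i \langle i|Z_n|i\rangle = \sum_i \lambda_i = 1$. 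This is precisely \eqref{DualincohLem}.

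Finally, I would appeal to Slater's condition to conclude strong duality. The dual is strictly feasible: the choice $Z_a = \tfrac{1}{d}\mathbb{I}$ for every $a$ is strictly positive, has constant diagonal $1/d$ across $i$ and $a$, and satisfies $\tr(Z_n) = 1$. Hence the optimal values of the primal and dual coincide, which proves the lemma. The main obstacle is conceptual rather than computational: the original primal is not a linear SDP because of the products $sN_a$ and $(1+s)q(a|i)$, and without the substitution above one cannot directly invoke conic duality. Once the reformulation is in place, the dualization is routine Lagrangian bookkeeping.
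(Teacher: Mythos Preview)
Your proposal is correct and follows essentially the same route as the paper: linearize the primal by absorbing the factor $(1+s)$ into new variables $t_{a,i}=(1+s)q(a|i)$, form the Lagrangian with PSD multipliers $Z_a$, read off the stationarity conditions, and invoke Slater via the strictly feasible dual point $Z_a=\tfrac{1}{d}\mathbb{I}$. The only cosmetic difference is that the paper eliminates the $n$-th outcome's variables using the normalization $\sum_a t_{a,i}=1+s$ before dualizing (yielding an asymmetric constraint for $a=n$), whereas you keep all outcomes on equal footing and introduce explicit equality multipliers $\lambda_i$; both routes produce the same dual.
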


From the above we get a physical characterization of ensembles of states that suffice to capture the maximal advantage of coherent measurements over the incoherent ones.

\begin{prop}[Alternative characterisation of robustness for incoherent measurements] \label{lem:dualINCOH}
Let $\M\in\PP(d,n)$. We have the following equality
\begin{equation}\label{eq:opROBincoh}
 R_{\IC}(\M) = \max_{\E_0}\frac{\psucc(\E_0,\M)}{{\max_{\N\in\IC(d,n)}}\psucc(\E_0,\N)} -1,
\end{equation}
where the outer optimisation is over ensembles $\E_0=\lbrace 1/n, \sigma_i \rbrace_{i=1}^n$ consisting of states that cannot be classically distinguished \cite{KorzDist} (that is, for all $j\in[d]$ and for all $i\in[n]$ we have $\bra{j} \sigma_i \ket{j}=1/d$). Due to this property of $\sigma_i$  for all incoherent measurements $\N$ we have $\psucc(\E_0,\N)=1/n$.
\end{prop}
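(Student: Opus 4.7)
The plan is to lift the dual characterization provided by the preceding Lemma directly into the language of state discrimination, mirroring the construction used in the proof of Theorem~\ref{th:opROB}.

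First I would apply the dual SDP and pick an optimal tuple $Z_1^\ast,\dots,Z_n^\ast$ satisfying $Z_a^\ast\ge 0$, $\langle j|Z_a^\ast|j\rangle = r_j$ for all $a,j$ (so the diagonal vector $r$ is common to every effect), $\sum_j r_j = \tr(Z_n^\ast) = 1$, and $\sum_a \tr(Z_a^\ast M_a) = R_{\IC}(\M)+1$. Summing the diagonal equality over $j$ immediately yields $\tr(Z_a^\ast)=1$ for every $a$, so each $Z_a^\ast$ is already a legitimate density matrix.

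Next I would set $\sigma_a := Z_a^\ast$, $q_a := 1/n$ and compute the two success probabilities explicitly. By dual-optimality $\psucc(\E_0,\M)=(R_{\IC}(\M)+1)/n$. For any $\N\in\IC(d,n)$ with effects $N_a=\sum_j q(a|j)|j\rangle\langle j|$, the common-diagonal property together with $\sum_a q(a|j)=1$ gives
\[
\psucc(\E_0,\N)=\tfrac{1}{n}\sum_{j}r_j\sum_a q(a|j)=\tfrac{1}{n}\sum_j r_j=\tfrac{1}{n},
\]
so the ratio in \eqref{eq:opROBincoh} equals $R_{\IC}(\M)+1$, establishing the $\geq$ direction. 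The opposite inequality $\leq$ is inherited from Theorem~\ref{th:opROB} specialised to $\F=\IC(d,n)$, since the ensembles allowed in \eqref{eq:opROBincoh} form a sub-class of the ensembles appearing there.

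The main obstacle is that the candidate ensemble produced by this argument has diagonals $r_j$ which are common across $a$ but not \emph{a priori} equal to $1/d$, whereas the proposition restricts the outer optimisation to ensembles with flat diagonals $\langle j|\sigma_a|j\rangle=1/d$. Closing this gap is the delicate step: I would attempt to flatten $r$ by symmetrising the optimal $Z_a^\ast$ over the group of computational-basis permutations, which leave $\IC(d,n)$ invariant and therefore preserve the dual value, and then verify that the objective $\sum_a\tr(Z_a^\ast M_a)$ is not decreased by this symmetrisation. This last step is where the most careful argument would be required.
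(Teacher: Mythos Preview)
Your approach coincides with the paper's: both first observe that for the relevant ensembles $\psucc(\E_0,\N)=1/n$ for every incoherent $\N$, and then identify the resulting expression $\max_{\sigma}\sum_a\tr(\sigma_a M_a)-1$ with the dual SDP of the preceding Lemma. The paper's proof is in fact shorter than yours---it simply asserts that the right-hand side of \eqref{eq:opROBincoh} ``is exactly the same as the optimization problem'' in \eqref{DualincohLem}, without further comment.

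You have, however, correctly spotted a genuine discrepancy that the paper glosses over: the dual constraints only force the $Z_a^\ast$ to share a \emph{common} diagonal $(r_j)$ with $\sum_j r_j=1$, whereas the proposition as stated restricts to the strictly smaller class with \emph{flat} diagonal $r_j=1/d$. The paper's one-line proof silently treats these as equivalent. Note that the weaker ``common diagonal'' condition is already enough for the operational consequence $\psucc(\E_0,\N)=1/n$ (your own computation shows this), and it is the natural reading of ``states that cannot be classically distinguished''; so the cleanest resolution is to regard the parenthetical $\langle j|\sigma_i|j\rangle=1/d$ in the statement as an imprecision and to work with common diagonals, in which case your argument is complete exactly as written, with no flattening step needed.

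Your proposed symmetrisation fix, by contrast, does not work. Replacing $Z_a^\ast$ by $\tfrac{1}{d!}\sum_\pi P_\pi Z_a^\ast P_\pi^{\dagger}$ indeed flattens the diagonal, but the objective becomes
\[
\sum_a \tr\Bigl(\tfrac{1}{d!}\sum_\pi P_\pi Z_a^\ast P_\pi^{\dagger}\,M_a\Bigr)
=\tfrac{1}{d!}\sum_\pi \sum_a \tr\bigl(Z_a^\ast\, P_\pi^{\dagger} M_a P_\pi\bigr),
\]
i.e.\ the average of the objective values that $Z^\ast$ attains on the \emph{permuted} POVMs $\M^\pi=(P_\pi^{\dagger} M_a P_\pi)_a$. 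Each term is bounded above by $R_{\IC}(\M^\pi)+1=R_{\IC}(\M)+1$, but since $Z^\ast$ is optimal for $\M$ and not for $\M^\pi$, the average is generically \emph{strictly smaller} than $R_{\IC}(\M)+1$. So the very step you flagged as requiring the most care is where the argument would break; the fact that permutations leave $\IC(d,n)$ invariant does not by itself guarantee that the dual \emph{objective} (which depends on $\M$) is preserved.
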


\begin{proof}
We observe that due to the assumed property of states $\sigma_i$ (for all $k$ $\bra{k} \sigma_i \ket{k}=1/d$), for any POVM $\N \in \IC(d,n)$, $\max_{\N\in\IC(d,n)}\psucc(\E_0,\N)=\frac{1}{n}$. Then RHS of \eqref{eq:opROBincoh} is exactly the same as optimization problem given in \eqref{DualincohLem} which completes the proof.
\end{proof}
The above operational characterisation of $R_{\IC}$ allows us to identify measurements with the greatest possible robustness (hence relative advantage for MESD) in a given dimension $d$.

\begin{prop}[Maximal advantage of coherent measurements over incoherent measurements]\label{lem:maxCOH}
The following equality holds
\begin{equation}
\max_{\M\in P(d,n)}  R_{\IC}(\M)=\min\lbrace d,n\rbrace -1\ .
\end{equation}
Moreover, when $n\geq d$ the measurement $\M$ attains the maximal value of $\R_\F$ if and only if its effects are proportional to maximally coherent states i.e pure states of the form $\ket{\psi}= (1/\sqrt{d})\sum_{j=1}^d \exp(i \theta_j) \ket{j}$.
\end{prop}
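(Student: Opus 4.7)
The plan is to exploit the alternative characterization from Proposition~\ref{lem:dualINCOH} to reduce the problem to a joint operator optimization. Since $\max_{\N\in\IC(d,n)}\psucc(\E_0,\N)=1/n$ for any admissible $\E_0$, that proposition yields
\begin{equation*}
R_{\IC}(\M)+1 \;=\; \max_{\{\sigma_i\}}\sum_{i=1}^n \tr(M_i\sigma_i),
\end{equation*}
where the maximum ranges over tuples of states with uniform diagonal $\bra{j}\sigma_i\ket{j}=1/d$. Hence computing $\max_{\M\in\PP(d,n)} R_{\IC}(\M)$ reduces to optimizing $\sum_i\tr(M_i\sigma_i)$ jointly over $\M$ and such $\{\sigma_i\}$.

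For the upper bound I would apply the trace inequality $\tr(AB)\leq\|A\|_\infty\tr(B)$ valid for positive semidefinite $A,B$, in both orderings. Since $\|M_i\|_\infty\leq 1$ (from $M_i\leq\idty$) and $\|\sigma_i\|_\infty\leq 1$, one obtains
\begin{equation*}
\sum_i\tr(M_i\sigma_i)\leq \min\!\left\{\sum_i\tr(M_i),\,\sum_i\tr(\sigma_i)\right\}=\min\{d,n\},
\end{equation*}
so $\max_{\M}R_{\IC}(\M)\leq\min\{d,n\}-1$.

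Achievability follows from explicit constructions. When $n\geq d$, take the Fourier vectors $\ket{\psi_k}=\tfrac{1}{\sqrt d}\sum_j e^{2\pi ijk/d}\ket{j}$, which form an orthonormal basis of maximally coherent states; set $M_k=\ket{\psi_k}\bra{\psi_k}$ for $k=1,\dots,d$ and $M_k=0$ otherwise, with $\sigma_k=\ket{\psi_k}\bra{\psi_k}$. Then $\sum_k\tr(M_k\sigma_k)=d$. When $n<d$, partition this Fourier basis into $n$ non-empty groups, let $V_i$ be the span of the $i$-th group, take $M_i$ to be the orthogonal projector onto $V_i$, and $\sigma_i$ to be any maximally coherent vector in $V_i$; each $\tr(M_i\sigma_i)=1$ and the sum equals $n$.

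For the ``iff'' statement in the regime $n\geq d$, the main work is a saturation analysis of the bound $\tr(M_i\sigma_i)\leq\tr(M_i)$, which must hold with equality for every $i$. Expanding $M_i=\sum_\alpha\mu_{i\alpha}\ket{e_{i\alpha}}\bra{e_{i\alpha}}$ rewrites this condition as $\bra{e_{i\alpha}}\sigma_i\ket{e_{i\alpha}}=1$ whenever $\mu_{i\alpha}>0$, which is only possible if $\sigma_i=\ket{e_{i\alpha}}\bra{e_{i\alpha}}$ is pure and $M_i$ has a single nonzero eigenvalue, i.e.\ $M_i=\tr(M_i)\ket{\psi_i}\bra{\psi_i}$ with $\sigma_i=\ket{\psi_i}\bra{\psi_i}$. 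The uniform-diagonal constraint on $\sigma_i$ then forces $\ket{\psi_i}$ to be maximally coherent. I expect this rank-one collapse argument to be the main step requiring care; the rest is routine packaging of the operator inequalities above.
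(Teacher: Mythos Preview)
Your proof is correct and follows essentially the same approach as the paper's: both start from Proposition~\ref{lem:dualINCOH} to rewrite $R_{\IC}(\M)+1$ as $\max_{\{\sigma_i\}}\sum_i\tr(M_i\sigma_i)$, use the bounds $\tr(M_i\sigma_i)\leq\tr(M_i)$ and $\tr(M_i\sigma_i)\leq 1$ for the two regimes, and achieve equality via Fourier-basis constructions. The only cosmetic differences are that the paper switches to the SDP dual of Lemma~2 for the $n<d$ upper bound and the ``iff'' saturation analysis (working with the dual variables $Z_a$ rather than your $\sigma_i$), and uses the particular partition $\{\ketbra{\tilde{j}}{\tilde{j}}\}_{j=1}^{n-1}\cup\{\1-\sum_{j<n}\ketbra{\tilde{j}}{\tilde{j}}\}$ for achievability when $n<d$; your unified treatment via Proposition~\ref{lem:dualINCOH} is arguably cleaner.
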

\begin{proof}[Proof]
From Lemma \ref{lem:dualINCOH} it follows that $R_\IC(\M)+1=\sum_i \tr(M_i \sigma_i)$, where   $\sigma_i$ cannot be classically distinguished. Since for any $\rho$ we have $\tr(M_i \rho)\leq \tr(M_i)$ we obtain $R_\IC(\M)+1\leq \sum_i \tr(M_i)=d$. Moreover, if $n\geq d$  this bound can be easily saturated for projective measurements in the Fourier basis $\ket{\tilde{j}} = \frac{1}{\sqrt{d}}\sum_{k=0}^{d-1} \exp\left(\frac{i2\pi jk}{d}\right)\ket{k}$ =with ensemble $\mathcal{E}_0=\{\frac{1}{d},|\tilde{j}\rangle\}_{j=1}^{d}$.\\
 For $n<d$, we will use \eqref{DualincohLem} to get the missing upper bound. We first note that since operators $M_a$ are upper bounded by $\1$ we have  $\tr(Z_aM_a) \leq \tr (Z_a)=1$. Therefore , from \eqref{DualincohLem} we can conclude that $R_\IC(\M)=\sum_a\tr(Z_aM_a) \leq \sum_a\tr (Z_a)=n$, hence $R_{\IC}(\M) \leq n-1$.     
Combining this with the previous case we conclude $ R_{\IC}(\M)=\min\lbrace d,n\rbrace -1$. \\ Just like in the case $n\geq d$ we will proof the saturation of this bound using Fourier transform of the original basis $ \ket{\tilde{j}}= \frac{1}{\sqrt{d}}\sum_{k=0}^{d-1} e^\frac{i2\pi jk}{d}\ket{k}$. To prove the equality for the case $n < d$, we choose an ensemble $\mathcal{\tilde{E}}$ uniformly distributed over first $n$ orthonormal states from ensemble $\mathcal{E}_0$, \textit{i.e}, $\mathcal{\tilde{E}}=\{\frac{1}{n},|\tilde{j}\rangle\}_{j=1}^{n}$ and an $n$ outcome POVM $\tilde{\M}:=\bigg\{\{|\tilde{j}\rangle\langle \tilde{j}|\}_{i=1}^{n-1}, \mathbb{I}-\sum_{j=1}^{n-1}|\tilde{j}\rangle\langle \tilde{j}|\bigg\}$. For this particular case, again RHS of \eqref{eq:opROBincoh} reduce to $n-1$. Combining these two cases we obtain $\max_{\M\in \PP(d,n)}  R_{\IC}(\M)=\min\lbrace d,n\rbrace -1\ $.  

Now we will show that for $n \geq d $ any POVM for which \eqref{eq:opROBincoh} holds, have the effects directly proportional to $|\psi\rangle\langle \psi|$ where $|\psi\rangle = (1/\sqrt{d})\sum_{j=1}^d \exp(i \theta_j)|j\rangle$. This follows from the simple observation that for $n\geq d$, maximization of \eqref{DualincohLem} $\forall a $, $\tr(Z_aM_a)=\tr(M_a) $ requires that $Z_a=P_{\mathrm{Supp}(M_a)}+B_a$, where $\tr(B_a P_{\mathrm{Supp}(M_a)})=0$. Moreover, to satisfy the constraint of \eqref{DualincohLem}, $P_{\mathrm{Supp}(M_a)}$ has to be one dimensional. This implies only feasible solution will be 
\begin{equation} \label{eq:onlysoln}
\begin{aligned} 
M_a = \alpha_a\Pi_a \; \; \; \; Z_a=\Pi_a,
\end{aligned}
\end{equation}  where $\Pi_a$ is rank one projector with trace 1. Then to be a valid POVM,
\begin{equation} \label{eq:validPOVMApp}
    \sum_a \alpha_a \Pi_a = \1 \Rightarrow \sum_a \alpha_a \langle i|\Pi_a|i\rangle  =1 .
\end{equation}  
Combining \eqref{eq:validPOVMApp} and \eqref{DualincohLem} we obtain $\langle i|\Pi_a|i\rangle=\frac{1}{d}$ for every $a$. This leads to the conclusion that the POVM $\M$ with $n\geq d$ outcomes and satisfying $R_{\IC}(\M)=d-1$  must have effects proportional to $|\psi\rangle\langle \psi|$ where $|\psi\rangle = (1/\sqrt{d})\sum_{j=1}^d \exp(i \theta_j)|j\rangle $.
\end{proof}

\subsection{Separable measurements}
We now turn our attention to the measurements that can be performed on a composite quantum systems. In what follows we will study the maximal advantage that  entangled measurements can offer for MESD over so-called separable measurements on multiparticle quantum systems. Separable measurements contain the set of LOCC measurements and are relatively easy to characterize.  For this reason the maximal  advantage with respect to separable measurements is in general smaller then the advantage with respect to LOCC measurements.

Consider a biparticle quantum system $\H_{AB}=\H_A \otimes \H_B$. We define a POVM $\FF$ as separable if all its effects $\{F_k\}_{k=1}^n$  are separable operators i.e. $ F_i= \sum_k Q^k_i \otimes \tilde{Q}^k_i$  where $\{Q^k_i\}_{i=1}^{n} $ and $\{\tilde{Q}^k_i\}_{i=1}^{n} $  are positive operators on $\H_A$ and $\H_B$ respectively. The above definition can be easily extended to system of multiple qubits  $\H_N =(\C^2)^{\otimes N}$.  We denote the set of biparticle separable measurements by $\Sep(AB)$ and separable multiqubit measurements by $\Sep(N)$. Recall that we made the dependence on  number of outcomes $n$ implicit. In what follows we will assume that $n$ is fixed and equals at least the dimension of the relevant Hilbert space ($d_A d_B$ and $2^N$ respectively). It easily follows that $\Sep(AB)$ and $\Sep(N)$  are convex compact subsets of $\PP(d,n)$ for suitable Hilbert spaces. The following Lemma gives the comprehensive answer to the question about the maximal advantage for MESD that entangled measurements can offer over separable measurements in both bipartite and  multiparty scenario. 
\begin{prop}[Maximal advantage of entangled meaurements over separable measurements] \label{lem:relPOWsep2}
Let $d_A$ and $d_B$ denote the dimensions of local spaces in the biparticle Hilbert space $\H_{AB}$. Then, the following inequalities hold
\begin{equation}\label{eq:UBbipart}
\min\lbrace d_A,d_B\rbrace -1 \leq \max_{\M} R_{\Sep(AB)}(\M)  \leq \min\lbrace d_A,d_B\rbrace\ , 
\end{equation} 
where the  maximization is over all measurements on $\H_{AB}$. \\
Moreover, the maximal robustness of entangled qubit measurement increases exponentially with the size of the system. Specifically, for sufficiently large $N$ we have 
\begin{equation}\label{eq:UBmultipart}
c \frac{2^N}{8N^2}-1 \leq \max_{\M} R_{\Sep(N)}(\M)  \leq 2^{\frac{3}{2}N -1} -1\ , 
\end{equation} 
where the  maximization is over all measurements on $\H_N$ and $c> 0.7$.
\end{prop}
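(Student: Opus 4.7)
The plan is to apply Theorem \ref{th:opROB} throughout, turning every robustness bound into a statement about the ratio $\psucc(\E,\M)/\max_{\N\in\F}\psucc(\E,\N)$. Both pairs of bounds then reduce to combinatorial/operator inequalities between entangled and separable objects.

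For the bipartite lower bound in \eqref{eq:UBbipart}, I would take (up to a harmless embedding when $d_A < d_B$) a POVM $\M$ whose effects are rank-one projectors onto a complete orthonormal basis $\{|\Phi_i\rangle\}_{i=1}^{d_A^2}$ of maximally entangled states in $\C^{d_A}\otimes\C^{d_A}\subseteq\H_{AB}$, paired with the uniform ensemble $\E=\{1/d_A^2,|\Phi_i\rangle\langle\Phi_i|\}$. Then $\psucc(\E,\M)=1$. The key inequality is the standard product-state overlap bound: for any product vector $|ab\rangle$, $|\langle ab|\Phi_i\rangle|^2\le 1/d_A$, so for any separable effect $F$ we have $\langle\Phi_i|F|\Phi_i\rangle\le\tr(F)/d_A$. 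Summing over $i$ and using $\sum_iN_i=\1$ gives $\psucc(\E,\N)\le 1/d_A$ for every $\N\in\Sep(AB)$, hence Theorem \ref{th:opROB} yields $R_{\Sep(AB)}(\M)+1\ge d_A=\min\{d_A,d_B\}$.

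For the bipartite upper bound I would take the trivial separable POVM $\N^\ast$ with $N^\ast_i=(\tr M_i/d_Ad_B)\,\1$ (a scalar multiple of $\1_A\otimes\1_B$) and show that for $s=\min\{d_A,d_B\}$ every rescaled effect $(M_i+sN^\ast_i)/(1+s)$ is separable. After dividing by $\tr(M_i)$ this is equivalent to the assertion that $(\rho+s\1/d_Ad_B)/(1+s)$ is separable for every state $\rho$ on $\H_{AB}$, which is the random robustness of entanglement bound: isotropic twirling reduces the worst case to that of a maximally entangled state, whose isotropic separability threshold is precisely $p\le 1/(d_{\min}+1)$ and yields $s=d_{\min}$. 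This gives $R_{\Sep(AB)}(\M)\le d_{\min}$ uniformly in $\M$.

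For \eqref{eq:UBmultipart} the upper bound follows from exactly the same depolarizing construction, now feeding in the Gurvits--Barnum separable-ball bound: the largest Frobenius-norm ball around $\1/2^N$ contained in the fully separable subset of $(\C^2)^{\otimes N}$ has radius of order $2^{-3N/2}$, which, after translating into the random-noise language, yields the stated upper bound $2^{3N/2-1}-1$. The multipartite lower bound is the genuinely hardest step and is where I expect most of the work to lie: I would build a data-hiding-style entangled POVM by taking $n\asymp 2^N$ effects proportional to projectors onto Haar-random pure states $|\psi_i\rangle\in(\C^2)^{\otimes N}$, together with the uniform ensemble over those states. A net/union-bound argument combined with Levy's lemma gives $\psucc(\E_\M,\M)=\Omega(1)$ with positive probability, while the worst-case overlap of a Haar-random pure state with any fully separable effect is controlled by $\max_{|\varphi\rangle\,\textrm{product}}|\langle\psi_i|\varphi\rangle|^2 = O(N/2^N)$ with high probability. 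Summing carefully over outcomes and optimising over $\N\in\Sep(N)$ yields $\max_{\N\in\Sep(N)}\psucc(\E_\M,\N)=O(N^2/2^N)$. The main obstacles are (a) getting the numerical constant $c>0.7$, which requires care with the concentration estimates and the $N^2$ factor coming from the size of the $\varepsilon$-net on product states, and (b) ensuring the random construction gives simultaneously a valid POVM (after appropriate rescaling of the effects and padding with the identity) and an ensemble for which Theorem \ref{th:opROB} can be invoked.
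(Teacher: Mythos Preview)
Your proposal is correct and follows essentially the same route as the paper: depolarizing/white-noise argument for both upper bounds (invoking Vidal--Tarrach in the bipartite case and Gurvits--Barnum in the multipartite case), and the uniform ensemble of orthogonal maximally entangled states together with the Schmidt-coefficient overlap bound for the bipartite lower bound.

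The only substantive difference is in the multipartite lower bound. The paper also takes the uniform ensemble $\E_{\mathrm{Haar}}$ of $2^N$ iid Haar-random pure states, but instead of building an ad-hoc POVM from rescaled rank-one projectors it invokes the \emph{pretty good measurement} $\M_{PGM}(\E_{\mathrm{Haar}})$ and cites Montanaro \cite{Montanaro2007}, which gives $\psucc(\E_{\mathrm{Haar}},\M_{PGM})\geq 0.7$ for typical realizations. This cleanly disposes of both obstacles you flag: the PGM is automatically a valid POVM, and the constant $c>0.7$ is handed to you rather than extracted from a concentration argument. For the separable side the paper quotes the Gross--Flammia--Eisert bound \cite{Gross2009} in the form $\max_{\phi\in\mathrm{SEP}}\tr(\phi\Psi)\leq 8N^2/2^N$ with failure probability $\leq\exp(-N^2)$; a single union bound over the $2^N$ states then gives $\psucc(\E_{\mathrm{Haar}},\N)\leq 8N^2/2^N$ for every $\N\in\Sep(N)$, without any $\varepsilon$-net bookkeeping on your side. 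Your direct approach would also work, but the paper's version is shorter because the hard concentration statements are imported wholesale from \cite{Montanaro2007} and \cite{Gross2009}.
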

\begin{proof}[Sketch of the proof] 
The upper bounds in \eqref{eq:UBbipart} and \eqref{eq:UBmultipart} follow from application of explicit simulation strategy based on  (global) depolarizing map $\Phi_t$. This map, transforms effects of a  measurement $\M$ as follows: $\Phi_t(M_i) = t M_i +(1-t)\frac{\tr(M_i)}{d} \1$. It is easy to check that the resulting operators still form a POVM on a total space. Because this map preserves traces of the effects we can write $\Phi_t(M_i) = \tr(\M_i)\left(\Phi_t(\rho_i)\right)$, where $\rho_i =M_i/\tr(M_i)$ is a quantum state. To get an upper bound on the robustness of $\M$ it that suffices to find lower bond on $t_\ast$ defined as the number for which \emph{all states} $\rho_i$ constructed from the effects of $M_i$ are separable. To find the upper bound we can directly use the results from entanglement theory. Concretely, the problem of how much "white noise" can be added to a quantum state before it becomes separable quantum states has been studied before. Specifically the maximal value of "random robustness" (minimal $s=1/t -1$ such that $\Phi_t(\rho)$ is separable) equals $\min\lbrace d_A,d_B\rbrace - 1$ in the biparticle scenario \cite{Vidal_robustness} and can be upper bounded by $2^{\frac{3}{2}N -1} -1$ for the multipartice case \cite{Gurvits2004}. To obtain the lower bounds we use characterization of the robustness MESD given in Eq.\eqref{eq:opROB} by showing advantage of entangled measurements over  separable ones for particular uniform ensambles of quantum states. for biparticle case we consider exnamble consisting of  orthogonal maximally states \cite{Nathanson2005}, while for multiparticle case we consider ensemble consisting of $2^N$ iid random quantum states  taken from the Haar measure on $(\C^2)^{\otimes N}$ \cite{Montanaro2007} (see Appendix for details). 
\end{proof}

\begin{rem*}
The above results complement existing results  \cite{Hayden2004,Nathanson2005,Matthews2009} on the relative power of LOCC separable, and global state discrimination. Our focus was on the advantage in discrimination of arbitrary ensembles of quantum states while previously the emphasis was put mostly on ensembles consisting of two states (for example in the context of quantum data hiding). 
\end{rem*}

\section{Open problems}
We finish with giving possible directions of further study. First, it is tempting to ask if non-free character of measurements can be certified in a device-independent manner in prepare-and measure scenarios \cite{ArminSelftest,PiotrPOVM}. Second interesting problem concerns finding the maximal that resourceful POVMs advantage for state discrimination problems for other important classes of measurements such as  LOCC measurements \cite{Nathanson2005,Chitambar2014} or projective-simulable measurements \cite{OszmaniecPOVM,MOPOST}. The potential lack of the increasing  separation (as $d\rightarrow\infty$) between projective and general POVMs will likely  have strong consequences for quantum information processing and Bell nonlocality.  Finally, it is natural to ask if analogous operational interpretations hold also for other classes of quantum objects such as quantum channels or quantum instruments.

\begin{acknowledgments}
We thank Zbigniew Pucha\l a,  Micha\l\ Horodecki, Paul Skrzypczyk and Sibasish Ghosh, for 
interesting and stimulating discussions. We are grateful to Filip Maciejewski for careful reading of the manuscript and his valuable comments. We acknowledge the support
of Homing programme of the Foundation for Polish Science co-financed
by the European Union under the European Regional Development Fund. \\
{Note added----} Upon the completion of this work we became aware of two works \cite{GuhnePOVM,Takagi2019} that independently obtained results overlapping with Theorem \ref{th:opROB} presented here. Moreover, similar methods were used in the context of measurement incompatibility \cite{TeikoComp,CavalComp}. 
\end{acknowledgments}

\onecolumngrid
\appendix

\
\part*{Appendix}
In what follows we give proofs that were omitted in the main part of the manuscript. In Part \ref{app:proofGEN} we give proofs concerning the general properties of the robustness $R_\F$ and its dual formulation. In Part \ref{app:proofPART} we give justifications of results specific to particular scenarios of incoherent and separable measurements.

\bibliography{references.bib}

\section{Proofs of general results}\label{app:proofGEN}

For the convenience of the reader we first recall the definition of robustness for general convex and compact subset of free measurements $\F\subset\PP(d,n)$

\begin{equation} \label{eq:robDEF1} 
R_{\F}(\M):=\min \bigg\{s |\ \exists \N \text{ such that } \frac{\M+s\N}{1+s}\in \F \bigg\} \ .
\end{equation}

\setcounter{prop}{0} 
\setcounter{lem}{0}
\begin{prop}[Properties of robustness]\label{lem:prop1}
Let $\F\subset \PP(d,n)$ be a compact convex set of free measurements. The robustness  $R_\F$  satisfies the following properties: (i) 
Faithfulness: $R_\F (\M) \geq 0$ and $R_\F (\M)>0$ iff $\M\notin\F$, (ii) Convexity: $R_\F (p\M +(1-p) \M') \leq  p R_\F(\M) +(1-p) R_\F(\M')$ , (iii)  Monotonicity under free operations: $R_\F \left[\varphi(\M)\right]\leq R_\F (\M)$ for all free operations $\varphi\in\OO$.
\end{prop}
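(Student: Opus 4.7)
The plan is to verify each of the three properties directly from the defining optimization problem~\eqref{eq:robDEF1}, relying on the convex structure of $\F$ and $\PP(d,n)$. Throughout, for each POVM $\M$ I will write $s_\M \defeq R_\F(\M)$ and fix an optimal noise POVM $\N_\M \in \PP(d,n)$ realizing the minimum, i.e.\ such that $\FF_\M \defeq (\M + s_\M \N_\M)/(1+s_\M) \in \F$; compactness of $\F$ (together with compactness of $\PP(d,n)$) ensures that this minimum is attained.

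For (i), nonnegativity $R_\F(\M) \geq 0$ is immediate because feasibility of $(\M + s\N)/(1+s)$ as a convex combination only makes sense for $s \geq 0$. If $\M \in \F$, then choosing $s=0$ and any $\N$ gives $\M/(1+0)=\M\in \F$, so $R_\F(\M)=0$. Conversely, $R_\F(\M)=0$ forces $\M = \M/(1+0) \in \F$ by attainment.

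For (ii), the plan is to combine the optimizers for $\M$ and $\M'$ into an optimizer for the convex mixture. Given $(\N_\M,\FF_\M)$ and $(\N_{\M'},\FF_{\M'})$, set $s \defeq p s_\M + (1-p)s_{\M'}$ and, when $s>0$, define the candidate noise POVM
\begin{equation*}
\N \defeq \frac{p\, s_\M\, \N_\M + (1-p)\, s_{\M'}\, \N_{\M'}}{s},
\end{equation*}
which is a convex combination of POVMs, hence itself a POVM. Using the identity $p(1+s_\M)+(1-p)(1+s_{\M'}) = 1+s$, a short manipulation gives
\begin{equation*}
\frac{[p\M+(1-p)\M'] + s\,\N}{1+s} \;=\; \frac{p(1+s_\M)}{1+s}\, \FF_\M + \frac{(1-p)(1+s_{\M'})}{1+s}\, \FF_{\M'},
\end{equation*}
which lies in $\F$ by convexity. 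Thus $R_\F(p\M + (1-p)\M') \leq s$, as required. The degenerate case $s=0$ is immediate: then $\M,\M'\in\F$ by (i), and convexity of $\F$ yields $p\M+(1-p)\M'\in\F$.

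For (iii), the plan is to push the optimal decomposition of $\M$ through $\varphi$. Writing $\FF_\M = \tfrac{1}{1+s_\M}\M + \tfrac{s_\M}{1+s_\M}\N_\M$ as a convex combination and applying convex linearity of $\varphi$ yields $\varphi(\FF_\M) = [\varphi(\M) + s_\M \varphi(\N_\M)]/(1+s_\M)$. Since $\varphi$ preserves $\F$ we have $\varphi(\FF_\M)\in\F$, and since $\varphi:\PP(d,n)\to\PP(d,n)$ the operator $\varphi(\N_\M)$ is a valid POVM. Hence the pair $(\varphi(\N_\M), s_\M)$ is feasible for $R_\F(\varphi(\M))$, giving $R_\F(\varphi(\M)) \leq s_\M = R_\F(\M)$. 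I do not anticipate any real obstacle here; the only point requiring care is that (ii) uses attainment of the infima, so it is essential that compactness of $\F$ (as assumed) be invoked to pass from infima to minima.
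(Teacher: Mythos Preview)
Your proof is correct and follows essentially the same route as the paper: both arguments expand the optimal decomposition $\M + s_\M \N_\M = (1+s_\M)\FF_\M$, combine the decompositions of $\M$ and $\M'$ with weights $p,(1-p)$ to exhibit a feasible pair for the mixture (your $\N,\textbf{T}$ are precisely the paper's $\tilde{\N},\textbf{T}$), and for monotonicity push the decomposition through $\varphi$. If anything your treatment of (iii) is slightly cleaner, since you apply convex linearity of $\varphi$ to the genuine convex combination $\FF_\M = \tfrac{1}{1+s_\M}\M + \tfrac{s_\M}{1+s_\M}\N_\M$, whereas the paper applies $\varphi$ to the affine expression $\M = (1+s_\M)\FF_\M - s_\M\N_\M$, which strictly speaking requires a word of justification.
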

\begin{proof}
\ \\ 
\emph{(i) Faithfulness}. This property follows directly from the definition given in Eq.\eqref{eq:robDEF1}.
	
\noindent \emph{(ii) Convexity}.  From the definition of the robustness \eqref{eq:robDEF1} it follows that for two POVMs $\M,\M'$ there exist POVMs $\N,\N' \in \PP(d,n)$, such that  
	\begin{eqnarray}\label{pseudomix}
	 \M  &= (1+R_{\F}(\M))\tilde{\M}-R_{\F}(\M)\N\ , \\
	 \M'  &= (1+R_{\F}(\M'))\tilde{\M'}-R_{\F}(\M')\N'\ ,
	\end{eqnarray}
	where $\N,\N'\in\F$. We therefore get that for all $p\in[0,1]$
	\begin{equation}
	\begin{aligned}
	p\M +(1-p)\M' = p[(1+R_{\F}(\M))\tilde{\M}-R_{\F}(\M)\N] & \\ +(1-p)[(1+R_{\F}(\M'))\M'-R_{\F}(\M')\N'] \ .  
	\end{aligned}
	\end{equation}	
After defining 
	\begin{eqnarray}
       \tilde{r}&\coloneqq & pR_{\F}(\M)+(1-p)R_{\F}(\M')\ , \\
       \tilde{\N}&\coloneqq & \frac{p R_{\F}(\M)\N+(1-p)R_{\F}(\M')\N'}{\tilde{r}}\  , \\
       \textbf{T} &\coloneqq & \frac{p(\tilde{\M}+R_{\F}(\M)\tilde{\M})+(1-p)(\tilde{\M'}+R_{\F}(\M')\tilde{\M'})}{1+\tilde{r}}\ ,
	\end{eqnarray}
we note that $\tilde{N} \in \PP(d,n)$, while $\textbf{T}\in\F$. We finally observe that
	\begin{equation}
	   p\M +(1-p)\M' = (1+\tilde{r})\textbf{T}-\tilde{r}\tilde{\N}\ ,
	\end{equation}
	which directly implies $\tilde{r}=pR_{\F}(\M)+(1-p)R_{\F}(\M') \geq R_\F (p\M +(1-p)\M')$.
	
\noindent   \emph{(iii) Monotonicity under free operations}.  From the definition we have that for every $\M$ there exist $\N\in\PP(d,n)$ and $\tilde{\M}\in\F$ such that
\begin{equation}
\M  = (1+R_{\F}(\M))\tilde{\M}-R_{\F}(\M)\N\ .
\end{equation}
Applying $\varphi$  to both sides of the above equality we obtain
\begin{equation}\label{eq:mapACTS}
\varphi(\M) = (1+R_{\F}(\M))\varphi(\tilde{\M})-R_{\F}(\M)\varphi(\N)\ .
\end{equation}
Since, by assumption, $\varphi(\tilde{\M})\in\F$, equation \eqref{eq:mapACTS} directly implies $R_\F \left[\varphi(\M)\right]\leq R_\F (\M)$.
\end{proof}

\begin{lem}\label{lem:AltROB2}
Let $\M$ $\in$ $\PP(d,n)$ and $\F\subset\PP(d,n)$ be a compact convex set of free measurements. Then, the generalized robustness  $R_{\F}(\M)$ can be expressed as the following optimisation problem.
\begin{equation} \label{eq:AltRob2}
\begin{aligned}
& {\text{maximize}}
& & \sum_{i=1}^n \tr(Z_iM_i)-1 \\
& \text{subject to}
& &  Z_i \geq 0, \; i = 1, \ldots, n\ ,\\ 
& & & \sum_{i=1}^n Tr(Z_iN_i) \leq 1  \; \forall  \N  \in \F\ .
\end{aligned}
\end{equation}
\end{lem}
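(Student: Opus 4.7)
The plan is to derive \eqref{eq:AltRob2} as the Lagrangian dual of the primal defining the robustness, and then to establish strong duality via strict feasibility on the dual side.

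First, I would reduce \eqref{eq:robDEF1} to a conic form. Setting $t = 1+s$, the condition ``$\exists\,\N \in \PP(d,n)$ with $(\M+s\N)/(1+s) \in \F$'' is equivalent to ``$\exists\,\F \in \F$ with $tF_i - M_i \geq 0$ for every $i$'': given such $\F$, the operators $N_i \defeq [(1+s)F_i - M_i]/s$ are PSD and their sum is $\1$ automatically, because $\sum_i F_i = \sum_i M_i = \1$. Summing the constraints also forces $t \geq 1$. Hence
\begin{equation*}
1 + R_\F(\M) \;=\; \inf\bigl\{\,t \geq 0 \,:\, \F \in \F,\ tF_i - M_i \geq 0 \ \forall i\,\bigr\}.
\end{equation*}

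Second, attaching PSD multipliers $Z_i \geq 0$ to the operator inequalities produces the Lagrangian
\begin{equation*}
L(t, \F, Z) \;=\; t - \sum_{i=1}^n \tr\bigl(Z_i(tF_i - M_i)\bigr) \;=\; t\Bigl(1 - \sum_i \tr(Z_i F_i)\Bigr) + \sum_i \tr(Z_i M_i).
\end{equation*}
Minimizing over $t \geq 0$ and $\F \in \F$ yields the dual function $g(Z) = \sum_i \tr(Z_i M_i)$ provided $\sum_i \tr(Z_i F_i) \leq 1$ for every $\F \in \F$, and $g(Z) = -\infty$ otherwise. Maximizing over $Z_i \geq 0$ reproduces the optimization in \eqref{eq:AltRob2} up to the additive constant $1$, and weak duality already yields the ``$\geq$'' direction.

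Third, I would upgrade weak duality to equality. To apply Slater's condition I switch to the dual side: the choice $Z_i = \varepsilon \1/n$ with $0 < \varepsilon < n/d$ is strictly positive and satisfies $\sum_i \tr(Z_i F_i) = \varepsilon d/n < 1$ uniformly in $\F \in \F$, so the dual is strictly feasible. Combined with finiteness of the primal optimum (which follows from compactness of $\F$ and the mild richness assumption that $\F$ dominates $\M$ by some finite multiple), the standard semi-infinite conic duality theorem invoked in \cite{Brandao_witness} delivers strong duality, i.e.\ $1 + R_\F(\M) = \sup_Z g(Z)$. Subtracting $1$ completes the proof.

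The main obstacle is justifying strong duality: Slater's condition on the primal typically fails when free measurements have rank-deficient effects, so the argument must route through strict dual feasibility (easy, by the explicit choice above) together with compactness of $\F$ to guarantee both finiteness of the primal and attainment of the dual supremum.
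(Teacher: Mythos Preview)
Your proposal is correct and follows essentially the same route as the paper: both establish the identity via Lagrangian duality for a semi-infinite conic program and close the duality gap by exhibiting the strictly feasible point $Z_i\propto\1$ on the \eqref{eq:AltRob2} side. The only cosmetic difference is direction---you dualize the (reformulated) robustness primal to obtain \eqref{eq:AltRob2}, whereas the paper dualizes \eqref{eq:AltRob2} and then identifies the resulting program with the robustness; your preliminary elimination of the auxiliary POVM $\N$ via $t=1+s$ is a clean simplification the paper does not make.
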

\begin{proof}
Our proof starategy is the following. The problem in \eqref{eq:AltRob2} belongs to the class of semi-infinite optimisation problems \cite{Brandao_witness}. We will compute its dual and argue that there is no duality gap. Finally, the dual problem will turn out to be equivalent to the original definition of robustness from \eqref{eq:robDEF1}. After simple algebraic manipulations we obtain that \eqref{eq:AltRob2} is equivalent to 
\begin{equation} \label{eq:secondPROB}
\begin{aligned}
-\min_\X \bigg\{\sum_i \tr(X_iM_i)-1: \forall {i}, \;  X_i \leq 0,\ \forall \N\in\F\ \ \sum_iTr(X_iN_i) \geq -1\bigg\}\ \\ 
=-\min_{\tilde{\X}} \bigg\{\sum_i \tr(\tilde{X}_iM_i):\forall{i},  \;  X_i \leq \frac{\mathbb{I}}{d}  \ \forall \N\in\F\ \   \sum_iTr(\tilde{X}_iN_i) \geq 0\bigg\}\ .
\end{aligned}
\end{equation}
The \textit{Lagrangian} corresponding to the last problem in \eqref{eq:secondPROB} reads 
\begin{equation}\label{eq:thirdproblem}
\begin{aligned}
\mathcal{L}(\tilde{X},\textbf{G},\lbrace k(\N) \rbrace)= \sum_{i=1}^{n}\bigg( \tr(\tilde{X}_iM_i)+\tr(X_i-\frac{I}{d})G_i-\int d\N k(\N) \tr(N_iX_i)\bigg) \ ,
\end{aligned}
\end{equation}
where $\textbf{G}=(G_1,\ldots,G_n)$ and $k(\N)$ are lagrange multipliers which satisfy $G_i \geq 0$ $\forall{i}$ and $ k\N)\geq 0$ for almost all $\N\in\F$. To construct the dual problem to \eqref{eq:thirdproblem} we minimize $\mathcal{L}(\tilde{X},\textbf{G},\lbrace k(\N) \rbrace)$ over primal variables $\tilde{\X}$. Since the \textit{Lagrangian} is affine function of $\tilde{\textbf{X}}$, we obtain minimum equal to $\infty$ unless $\textbf{G}+\M= \int k(\N)\N d\N$. If the condition holds then we obtain $-\frac{1}{d}\sum_{i=1}^{n}\tr (G_i)$. Hence we see that the dual of the problem \eqref{eq:AltRob2} can be written as 
\begin{equation}\label{n6}
\begin{aligned}
& {\text{minimize}}
& & \frac{1}{d} \sum_{i=1}^{n} \tr(G_i) \\
& \text{subject to}
& &  \textbf{G}+\M= \int k(\N)\N d\N \ , \\ 
& & & G_i \geq 0  \; \forall{i} \ , \\
& & & k(\N) \geq 0 \; \; \; \forall{\N}\in \F\ .
\end{aligned}
\end{equation}
We note that the primal problem \eqref{eq:AltRob2} posseses a strictly feasible points (for example $\mathbf{Z}=(\frac{\lambda \1}{d},\ldots,\frac{\lambda \1}{d}$, for $\lambda\in(0,1)$), which follows from the assumption of the compactness of $\F$ and the form of the constrains in \eqref{eq:AltRob2}. Finally, we conclude the proof by noting that $\textbf{G}+\M= \int k(\N)\N d\N $ iff $\textbf{G}=s \M'$ where $\M'\in \PP(d,n)$ and $s=\sum_i\frac{\tr G_i}{d}$.    
\end{proof}

\section{Proofs of results that concern particular classes of free measurements}\label{app:proofPART}

\subsection{Incoherent measurements}

 From the definition of robustness (see Eq.\eqref{eq:robDEF1}) and the definition of incoherent measurements we obtain that $R_{\IC}(\M)$ can be cast as the following SDP program

\begin{equation} \label{eq:SDPincoh}
\begin{aligned}
& {\text{minimize}}
& & s\\
& \text{s.t}
& & \frac{M_a+sN_a}{1+s}=\sum_i q(a|i)|i\rangle\langle i|  \; \forall a \ , \\ 
& & & N_a \geq 0, \; a = 1, \ldots, n. \; \;  \sum_a N_a=\mathbb{I}\ , \\
& & & \sum_a q(a|i)(1+s)=b \; \;  \forall i \ .
\end{aligned}
\end{equation}.

\setcounter{lem}{1}

\begin{lem}
[Dual characterization of robustness w.r.t. incoherent measurement] Robustness of a POVM $\M \in \PP(d,n)$ w.r.t set of incoherent measurement $\mathcal{IC}(d,n)$, \textit{i.e}, $R_{\IC}(\M)$ can be expressed as the solution of optimization problem given by 
\begin{equation} \label{DualincohLemApp}
\begin{aligned}
& {\text{maximize}}
& & \sum_{a=1}^{n}\tr(Z_aM_a) -1\\
& \text{subject to}
& & \forall i,a \; \; Z_a \geq 0\;\;\;\; \langle i |Z_a|i\rangle  = \langle i |Z_n|i\rangle \ ,  \\ 
& & & \tr(Z_n)=1\ .
\end{aligned}
\end{equation} 
\end{lem}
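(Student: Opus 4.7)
The plan is to derive \eqref{DualincohLemApp} by specialising the general dual characterisation in Lemma \ref{lem:AltROB2} to the free set $\F = \IC(d,n)$, and then simplifying the resulting semi-infinite constraint.

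First, I would parameterise every incoherent POVM by a column-stochastic matrix: any $\N \in \IC(d,n)$ has effects of the form $N_a = \sum_{j=1}^d q(a|j)\,\ket{j}\!\bra{j}$ with $q(a|j)\ge 0$ and $\sum_a q(a|j) = 1$ for each $j$. Substituting this into the constraint $\sum_a \tr(Z_a N_a)\le 1$ from \eqref{eq:AltRob2} turns the semi-infinite family of inequalities into
\begin{equation}
\sum_{j=1}^d \sum_{a=1}^n q(a|j)\,\bra{j} Z_a \ket{j} \le 1 \qquad \text{for every column-stochastic } q.
\end{equation}
Maximising the left-hand side over $q$ decouples across columns: for each $j$ one should place all mass on the outcome $a$ that maximises $\bra{j} Z_a \ket{j}$. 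Hence the whole family collapses to the single inequality $\sum_{j=1}^d \max_a \bra{j} Z_a \ket{j} \le 1$.

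Second, I would argue that at the optimum we may assume the diagonal entries $\bra{j} Z_a \ket{j}$ do not depend on $a$. Given any feasible tuple $(Z_1,\ldots,Z_n)$, set $c_j := \max_a \bra{j} Z_a \ket{j}$ and define $\tilde Z_a := Z_a + \sum_j \bigl(c_j - \bra{j} Z_a \ket{j}\bigr)\ket{j}\!\bra{j}$. The added operator is a non-negative diagonal, so $\tilde Z_a \ge 0$; since $M_a \ge 0$ has non-negative diagonal entries, the objective $\sum_a \tr(\tilde Z_a M_a)$ is at least $\sum_a \tr(Z_a M_a)$; and the constraint now reads $\sum_j c_j \le 1$. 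After this padding we have $\bra{i}\tilde Z_a\ket{i} = \bra{i}\tilde Z_n\ket{i} = c_i$ for all $i,a$, recovering the first constraint of \eqref{DualincohLemApp}. The remaining constraint $\tr(\tilde Z_n) = \sum_j c_j = 1$ may be imposed with equality, because if $\sum_j c_j < 1$ a uniform rescaling $Z_a \mapsto \lambda Z_a$ with $\lambda > 1$ strictly increases the objective while still satisfying the constraint.

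The main obstacle I expect is the padding argument in the second step: one must verify that the replacement $Z_a \mapsto \tilde Z_a$ simultaneously preserves positivity, does not increase the $\sum_j c_j$ bound, and does not decrease the objective. All three properties hinge on the non-negativity of the diagonal of $M_a$, which is immediate from $M_a \ge 0$. An equivalent route, which I would mention but not carry out in detail, is to write \eqref{eq:SDPincoh} as an SDP, take its Lagrangian dual directly (introducing multipliers for the positivity of $N_a$, the POVM normalisation $\sum_a N_a = \1$, and the off-diagonal vanishing of $M_a + s N_a$), and verify that Slater's condition holds; this yields \eqref{DualincohLemApp} without duality gap.
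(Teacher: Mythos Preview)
Your proposal is correct and takes a genuinely different route from the paper. The paper proves the lemma by writing the primal SDP \eqref{eq:SDPincoh} in an equivalent form, constructing its Lagrangian directly, and reading off the dual constraints $Z_a\ge 0$, $\langle i|Z_a|i\rangle=\langle i|Z_n|i\rangle$, $\tr(Z_n)=1$; strong duality is then checked via a strictly feasible point. You instead start from the already-established general dual (Lemma~\ref{lem:AltROB2}), specialise the semi-infinite constraint $\sum_a\tr(Z_aN_a)\le 1$ to $\F=\IC(d,n)$, collapse it to $\sum_j\max_a\langle j|Z_a|j\rangle\le 1$, and then use a padding-and-rescaling argument to restrict to tuples with equal diagonals and $\tr(Z_n)=1$ without loss of optimality. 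Your approach is more modular and arguably more transparent: it reuses the general duality result rather than re-deriving a Lagrangian, and the padding step makes explicit why the equal-diagonal constraint in \eqref{DualincohLemApp} is not restrictive. The paper's approach is more self-contained and mechanical, and avoids the need to verify that padding does not decrease the objective. One small remark: your claim that rescaling ``strictly increases the objective'' relies on the optimal value of $\sum_a\tr(Z_aM_a)$ being positive, which follows from $R_\IC(\M)\ge 0$ (Proposition~\ref{lem:prop1}); you may want to say this explicitly. The alternative route you mention at the end is precisely what the paper does.
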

\begin{proof}
Introducing variables $\tilde{q}(a|i):= (1+s)q(a|i)$ $\forall i$ and $\forall a \in [1,(n-1)]$ we obtain an equivalent form of problem from \eqref{eq:SDPincoh} 
\begin{equation} \label{n16}
\begin{aligned}
& {\text{minimize}}
& & b-1\\
& \text{subject to}
& & -\sum_i \tilde{q}(a|i)|i\rangle\langle i| + M_a \leq 0 \; \; \;  \forall a \in [1,(n-1)]\ ,\\ 
& & & -b\mathbb{I}+\sum_i\sum_{a=1}^{n-1}\tilde{q}(a|i)|i\rangle\langle i| +M_n \leq 0\ .
\end{aligned}
\end{equation} 
To construct the dual of the optimization problem \cite{Skrz2016,sdpREV} given in \eqref{n16}, we start by writing down the \textit{Lagrangian} associated with primal problem 
\begin{equation}\label{Lagincoh}
\mathcal{L}(\textbf{Z},\tilde{q}(a|i)):=(b-1)+ \tr(Z_n(-b\mathbb{I}+\sum_i\sum_{a=1}^{n-1}\tilde{q}(a|i)|i\rangle\langle i| \\  + M_n))+ \sum_{a=1}^{n-1}\tr(Z_a(-\sum_i \tilde{q}(a|i)|i\rangle\langle i| + M_a ) \ ,
\end{equation}   
where we have introduced $\textbf{Z}=(Z_1,Z_2,\ldots,Z_n)$ as dual variable (Lagrange multipliers). To ensure that the \textit{Lagrangian} is always smaller than the objective function whenever the constraints of the \eqref{n16} are satisfied, we impose $Z_j \geq 0$ and $\langle i|Z_j|i \rangle=\langle i|Z_n|i \rangle$ $\forall{j \in [1,n]}$. To ensure that the \textit{Lagrangian} is independent of primal variables we have $\tr(Z_n)=1 $. Finally, the corresponding dual problem which sets the upper bound for primal optimization problem is given by 
\begin{equation} \label{eq:Dualincoh}
\begin{aligned}
& {\text{maximize}}
& & \sum_{a=1}^{n}\tr(Z_aM_a) -1\\
& \text{subject to}
& & \forall i, a \;\; Z_a \geq 0\;\;\;\; \langle i |Z_a|i\rangle  = \langle i |Z_n|i\rangle \ , \\ 
& & & \tr(Z_n)=1 \ .
\end{aligned}
\end{equation} 
By inspection one can conclude that strong duality holds, since   $\mathbf{Z}_0=(\frac{\mathbb{I}}{d},\ldots,\frac{\mathbb{I}}{d})$ satisfy the constraints in \eqref{eq:Dualincoh} and is positive-definite. Therefore, the solutions \eqref{n16} and \eqref{eq:Dualincoh} are equal.
\end{proof}

\subsection{Separable measurements}

\setcounter{prop}{3}

\begin{prop}[Maximal advantage of entangled measurements over separable measurements] \label{lem:relPOWsep}
Let $d_A$ and $d_B$ denote the dimensions of local spaces in the biparticle Hilbert space $\H_{AB}$. Then, the following inequalities hold
\begin{equation}\label{eq:UBbipart2}
\min\lbrace d_A,d_B\rbrace -1 \leq \max_{\M} R_{\Sep(AB)}(\M)  \leq \min\lbrace d_A,d_B\rbrace\ , 
\end{equation} 
where the  maximization is over all measurements on $\H_{AB}$. \\
Moreover, the maximal robustness of entangled qubit measurement increases exponentially with the size of the system. Specifically, for sufficiently large $N$ we have 
\begin{equation}\label{eq:UBmultipart2}
c \frac{2^N}{8N^2} \leq \max_{\M} R_{\Sep(N)}(\M)  \leq 2^{\frac{3}{2}N -1} -1\ , 
\end{equation} 
where the  maximization is over all measurements on $\H_N$ and $c> 0.7$.
\end{prop}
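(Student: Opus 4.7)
The plan is to prove the upper and lower bounds in \eqref{eq:UBbipart2} and \eqref{eq:UBmultipart2} separately, since they rely on different machinery. For both upper bounds I would use the global depolarizing channel $\Phi_t(X) := tX + (1-t)\tr(X)\1/d$, with $d = d_A d_B$ or $d = 2^N$, as an explicit simulation strategy. Since $\sum_i \Phi_t(M_i) = \1$ and positivity is preserved for $t\in[0,1]$, $\Phi_t(\M)$ is again a valid POVM. The key observation is that $\Phi_t(M_i) = \tr(M_i)\,\Phi_t(\rho_i)$ with $\rho_i = M_i/\tr(M_i)$ a bona fide quantum state, so whenever $t$ is small enough that $\Phi_t(\sigma)$ is separable for \emph{every} state $\sigma$ on the ambient space, each $\Phi_t(M_i)$ is a positive multiple of a separable operator and therefore $\Phi_t(\M)\in\Sep$. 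Writing $\Phi_t(\M) = (\M + s\N^{(0)})/(1+s)$ with $s = (1-t)/t$ and $\N^{(0)}_i := \tr(M_i)\1/d$ (a valid POVM since $\sum_i\tr(M_i) = d$), we read off $R_\Sep(\M) \le s$. Minimising $s$ reduces the upper bound to the worst-case random robustness over quantum states, for which I would invoke existing results: Vidal--Tarrach's computation, which gives $\max_\sigma R_R(\sigma) = \min\{d_A,d_B\}$ in the bipartite case, and Gurvits--Barnum's bound on the separable ball around the maximally mixed state on $(\C^2)^{\otimes N}$, which yields $\max_\sigma R_R(\sigma) \le 2^{3N/2-1}-1$ in the multi-qubit case.

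For the lower bounds I would invoke Theorem~\ref{th:opROB}, which reduces the problem to exhibiting an ensemble $\E$ and a POVM $\M$ whose ratio $\psucc(\E,\M)/\max_{\N\in\Sep}\psucc(\E,\N)$ is provably large. In the bipartite case, the natural choice is to take $\E$ as the uniform ensemble over a canonical basis of mutually orthogonal maximally entangled states on $\H_{AB}$ together with the associated projective measurement $\M$, yielding $\psucc(\E,\M) = 1$; Nathanson's bound then forces $\max_{\N\in\Sep}\psucc(\E,\N) \le 1/\min\{d_A,d_B\}$, and hence $R_\Sep(\M) \ge \min\{d_A,d_B\} - 1$.

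In the multipartite case the construction is probabilistic: I would take $\E = \{1/2^N,\ketbra{\psi_i}{\psi_i}\}_{i=1}^{2^N}$ with $\ket{\psi_i}$ iid Haar-random on $(\C^2)^{\otimes N}$, paired with the pretty-good (square-root) measurement $\M$ associated to $\E$. The pretty-good measurement attains $\psucc = \Theta(1)$ with overwhelming probability, whereas by Montanaro's concentration estimates all separable measurements are simultaneously confined to $\psucc \le O(N^2/2^N)$ with probability tending to one. A union bound then guarantees the existence of a particular realisation of $\E$ for which the ratio is at least $c\cdot 2^N/(8N^2)$ with the explicit constant $c > 0.7$, and the lower bound follows. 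The main technical obstacle is precisely this multipartite lower bound: extracting the explicit constant $c > 0.7$ requires carefully tracking numerical constants in both Montanaro's concentration argument for separable measurements and the pretty-good measurement analysis on Haar-random ensembles, and coupling the two high-probability events so that a single random realisation witnesses both inequalities.
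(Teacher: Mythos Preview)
Your proposal is correct and follows essentially the same approach as the paper: depolarizing-map simulation combined with Vidal--Tarrach and Gurvits--Barnum for the upper bounds, and the operational characterization of robustness (Theorem~\ref{th:opROB}) with explicit ensembles for the lower bounds. The only discrepancy is one of attribution in the multipartite lower bound: the paper obtains the $O(N^2/2^N)$ bound on $\psucc$ for separable measurements not from Montanaro but from the Gross--Flammia--Eisert concentration result showing that Haar-random pure states on $(\C^2)^{\otimes N}$ have overlap at most $8N^2/2^N$ with every product state (with probability at least $1-\exp(-N^2)$), then takes a union bound over the $2^N$ states and uses $\tr(\Psi_i N_i)\le \tr(N_i)\max_{\phi\in\mathrm{SEP}}\tr(\phi\Psi_i)$; Montanaro is invoked only for the $\psucc\ge 0.7$ bound on the pretty-good measurement.
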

\begin{proof}[Proof of lower bounds in Eq.\eqref{eq:UBbipart2} and Eq.\eqref{eq:UBmultipart2}]
Our strategy to prove the lower bounds is to construct ensembles $\E$ for which the relative advantage 
\begin{equation}
\frac{\psucc(\E,\M)}{{\max_{\N\in Sep(\mathcal{H}_{A}:\mathcal{H}_{B})}}\psucc(\E,\N)}
\end{equation}
of some entangled measurement $\M$ over the separable ones will be large. 
\emph{(i) Biparticle case} Let $D=min\lbrace d_A,d_B\rbrace$. Following Nathanson \cite{Nathanson2005}, we consider an uniform  ensemble $\E_0=\lbrace{\frac{1}{D^2},\Psi_{nm}\rbrace}_{n,m=1}^d $ consisting orthogonal maximally entangled states from $\H_A \otimes \H_B$,
\begin{equation}\label{eq:states}
    \ket{\Psi_{nm}}=\frac{1}{\sqrt{d}}\sum_{j=0}^{d-1}e^{\frac{i2\pi jn}{d}}|j\rangle \otimes |(j+m) \; \; \text{mod } d\rangle\ .
\end{equation}
Since vectors $\ket{\Psi_{nm}}$ are orthogonal, they can be distinguished perfectly by measurement $\M_0$ whose effects are of the form $M_{nm}=\Psi_{nm}+R_{nm}$, where $R_{nm}$ are operators orthogonal to states $\ket{\Psi_{nm}}$. On the other hand there is a limit of distinguishability  of states from $E_0$ via separable measurements because of the well-known property (see for example Lemma 2 of \cite{BandSep}) stating that for all biparticle quantum states $\ket{Psi}\in\H_A \otimes \H_B$ and positive separable operators $T$
\begin{equation}\label{eq:usefunINEQ}
\tr( \Psi T )\leq \lambda^2 \tr(T)\ ,
\end{equation}
where $\lambda$ is the greatest Schmidt coefficient of $\ket{\Psi}$. Let $\mathbb{P}_0$ be a projector onto a subspace $\tilde{H}_{AB}\subset\H_{AB}$ spanned by vectors $\ket{i}\ket{j}$, where $i,j=0,\ldots,D-1$. Importantly, $\mathbb{P}_0$ is a separable operator. Applying inequality \eqref{eq:usefunINEQ} to our problem we obtain that for all $D$-outcome separable measurements $\N$ we can upper bound $\psucc$ as follows
\begin{equation}
\psucc(\E_0,\N)=\frac{1}{D^2}\sum_{n,m=1}^D \tr(\Psi_{nm} N_{nm}) =\frac{1}{D^2} \sum_{n,m=1}^D \tr(\Psi_{nm} \mathbb{P}_0 N_{nm} \mathbb{P}_0)\leq \frac{1}{D}\frac{1}{D^2} \sum_{n,m=1}^D \tr(\mathbb{P}_0 N_{nm} \mathbb{P}_0)=\frac{1}{D}\ ,
\end{equation}
where we have used the separability of operators $\mathbb{P}_0 N_{nm} \mathbb{P}_0$ and the fact that states $\Psi_{nm}$ are maximally entangled and hence $\lambda=1/\sqrt{D}$. Using the operational interpretation of the robustness $R_{\Sep(AB}$ we conclude the proof by observing that
\begin{equation}\label{eq:randomENSAMBL}
R_{\Sep(AB)}(\M_0)\geq \frac{\psucc(\E_0,\M_0)}{\max{N\in\Sep(AB)} \psucc(\E_0,\N)}-1\geq D -1\ .
\end{equation}
\\
\emph{(ii)} {\emph{Multiparticle case}} We proceed analogously to the biparticle case. Consider a random  uniform ensemble of $M=2^N$ Haar random, independant and  identically distributed pure states in $\H_N$, 
\begin{equation}
\E_{\mathrm{Haar}}=\left\lbrace{\frac{1}{2^N},\Psi_i}\right\rbrace_{i=1}^{2^N} \ .
\end{equation} 
The above ensemble is random as states composing it are independent random pure states i.e. $\Psi_i = U_i \Psi_0 U^\dagger_i$, for $U_i$  distributed according to Haar measure on the group $\mathrm{U}(\H_N)$. In \cite{Montanaro2007} it was shown that in the limit of large $N$, for typical realizations of the ensemble $\E_{\mathrm{Haar}}$ the so-called pretty good measurement $\M_{PGM}(\E_{\mathrm{Haar}})$   is capable to to distinguish states from $\E_{\mathrm{Haar}}$ with success probability bounded away form zero
\begin{equation}\label{eq:PGM}
\psucc(\E_{\mathrm{Haar}},\M_{PGM}(\E_{\mathrm{Haar}}))\geq 0.7\ .
\end{equation}
Importantly, since states $\Psi_i$ are Haar random, they are typically highly entangled and this causes them to be indistinguishable by separable measurements. To show this we will use a result that can be found in the proof of Theorem 2  form \cite{Gross2009}, which states that for Haar random pure state $\Psi$ with probability $P\leq\exp(-N^2)$
\begin{equation}
\max_{\phi\in\mathrm{SEP}} \tr(\phi \Psi) \geq \frac{8N^2}{2^N}\ , 
\end{equation}
where the maximum is over all separable pure states on $\H_N$. Now, since states $\Psi_i$ are iid Haar-random, by the union bound, we get that with probability $q\geq 1 - 2^N \exp(-N^2)$, in the limit of large dimensions
\begin{equation}\label{eq:boundOVERLAP}
\forall_i\ \max_{\phi\in\mathrm{SEP}} \tr(\phi \Psi_i) \leq \frac{8N^2}{2^N}\ .
\end{equation}
The above bound (which holds with high probability as $N\rightarrow\infty$) can be the used to upper bound the success probability of discrimination of states from the the ensemble. Note that for every separable effect $N_i$ we have $\tr(\Psi_i N_i)\leq \tr(N_i) \max_{\phi\mathrm{SEP}} \tr(\phi \Psi_i)\leq \tr(N_i) \frac{8N^2}{2^N} $. Using this we obtain
\begin{equation}\label{eq:finalUPPER}
\psucc(\E_{\mathrm{Haar}},\N)\leq\frac{1}{2^N}\sum_{i=1}^{2^N}\tr(N_i) \frac{2^N}{8N^2}=\frac{8N^2}{2^N}\ .
\end{equation}  
Finally, combining \eqref{eq:PGM}, \eqref{eq:finalUPPER}, and the operational characterisation of $R_{\Sep(N)}$ we obtain the desired lower bound by realizing that there exist a realization of a random ensemble $\E_{\mathrm{Haar}}$ such that
\begin{equation}
\max_\M R_{\Sep(N)}(\M)\geq \frac{\psucc(\E_{\mathrm{Haar}},\M_{PGM}(\E_{\mathrm{Haar}}))}{\max_{\N\in\Sep(N)}\psucc(\E_{\mathrm{Haar}},\N)}-1\geq c \frac{2^N}{8N^2}-1\ .
\end{equation}

\end{proof}

 
\end{document}